\documentclass[11pt,a4paper]{article}
\usepackage[T1]{fontenc}
\usepackage[utf8]{inputenc}
\pdfoutput=1

\usepackage{amsmath,amssymb,amsthm}
\usepackage{booktabs}
\usepackage{authblk}
\usepackage{graphicx}
\usepackage{array}
\usepackage{tabularx}
\usepackage{longtable}
\usepackage{multirow}
\usepackage{enumitem}
\usepackage{microtype}
\usepackage{url}
\usepackage{xurl}
\usepackage{xcolor}
\usepackage{caption}
\usepackage{subcaption}
\usepackage{placeins}
\usepackage{siunitx}
\usepackage[a4paper,top=24mm,bottom=26mm,left=25mm,right=25mm]{geometry}
\usepackage{tikz}
\usetikzlibrary{arrows.meta,positioning,fit,shapes.geometric,backgrounds}
\usepackage{hyperref}

\graphicspath{{figures/}}
\hypersetup{
  colorlinks=true,
  linkcolor=blue!55!black,
  citecolor=blue!55!black,
  urlcolor=blue!55!black,
  pdfauthor={Peplluis Esteva de la Rosa; Sai Srikanth Madugula},
  pdftitle={Atomic Common-Day Invoice Clearing under Causal Daily Scheduling: Path-Enabled and Bounded-Cycle Policies}
}
\newtheorem{proposition}{Proposition}

\newcommand{\EUR}{EUR\,}
\newcommand{\PMR}{\mathrm{PMR}}
\newcommand{\IC}{\mathrm{IC}}
\newcommand{\CDG}{\mathrm{CDG}}
\newcommand{\doi}[1]{\href{https://doi.org/#1}{\nolinkurl{doi:#1}}}
\newcommand{\stableurl}[2]{\href{#1}{#2}}

\begin{document}
\emergencystretch=3em
\title{Atomic Common-Day Invoice Clearing under Causal Daily Scheduling:\\Path-Enabled and Bounded-Cycle Policies}
\author[1,2]{Peplluis Esteva de la Rosa}
\author[1]{Sai Srikanth Madugula}
\affil[1]{Woxsen University, Hyderabad, India}
\affil[2]{Universitat de Girona, Girona, Catalonia, Spain}
\date{}

\maketitle

\begin{abstract}
Late payment propagates working-capital pressure through supply networks because firms are simultaneously creditors and debtors. This paper develops an atomic-record temporal invoice-graph method for path-enabled clearing and compares it with complete-candidate bounded-cycle netting under a causal daily greedy schedule. Each invoice remains a residual record with its issue date, due date, amount, and identifier. A candidate is executable through source capacity active on every supporting edge on one common day. A non-bilateral two-edge path reduces two invoice legs, creates a direct settlement instruction between the endpoints, and preserves net positions for all participants on the combined invoice-plus-instruction state; payable-mass reduction is distinguished from invoice compression. The empirical sequence contains 749,952 invoices issued from 2012 through 2023, totalling \EUR99.705 billion, and maintains one rolling state across annual boundaries so bridge invoices are introduced once and residual balances continue. Path clearing reduces 48.202\% of issue-cohort mass, compared with 43.347\% for length-eight cycle netting, an advantage of \EUR4.841 billion and 4.855 percentage points. It leads materially in ten cohorts, is practically tied in 2013, and trails in 2012. The result survives alternative ordering, mixed path-cycle policies, cycle-length sensitivity, component resampling, acceleration constraints, and fragment replay. In reciprocal 2022, where almost all local paths lie in the cyclic core, path clearing still leads. Tractable full-information linear programs show path advantage and path-cycle complementarity. The findings establish a policy-level benefit, not global optimality or welfare dominance, and motivate future asynchronous agent-to-agent clearing built around deterministic common-day verification, private mandates, reservations, explicit consent, and atomic commit.
\end{abstract}

\noindent\textbf{Keywords:} Invoice graphs ; Multilateral clearing ; Temporal graph algorithms ; Debt simplification ; Causal scheduling ; Working capital ; Decentralized coordination

\vspace{1em}

\section{Introduction}\label{sec:introduction}

Late payment links financial timing to procurement, production continuity, supplier reliability, and network resilience. A firm awaiting a receivable may simultaneously owe wages, taxes, logistics providers, and suppliers. Payment delay can therefore propagate through production relations even when every originating invoice is bilateral. Financing instruments respond by introducing funds or reallocating credit risk. Multilateral clearing uses a different lever: it coordinates obligations already present in the network so that less gross payable mass must ultimately be discharged through cash transfers.

A directed invoice graph represents firms as nodes and invoices as debtor-to-creditor edges. Cycle netting subtracts a common amount from every edge of a directed circuit and preserves each participant's net position. The closure requirement makes the operation identifiable and consentable, but it excludes open chains. If $A$ owes $B$ and $B$ owes $C$, a two-edge path operation can reduce both invoice legs and create a settlement instruction from $A$ to $C$. For a non-bilateral amount $q$, residual invoice mass falls by $2q$, instruction mass rises by $q$, and post-instruction payable mass falls by $q$. Reciprocal $A\rightarrow B\rightarrow A$ operations create no instruction and reduce payable mass by $2q$.

The central empirical question is whether enlarging the admissible clearing move from closed circuits to local two-edge paths produces additional settlement relief, and whether that advantage persists when decisions must be made with only the information available at the time of execution. The study therefore compares two operational primitives on the same invoice networks. The first is bounded cycle netting, which removes a common amount around a directed circuit. The second is path-enabled clearing, which transforms a local chain $A\rightarrow B\rightarrow C$ by reducing the matched portions of the two invoice obligations and, when $A\neq C$, replacing them with a settlement instruction from $A$ to $C$. Both mechanisms preserve participant net positions, but they exploit different local structures and impose different coordination requirements.

The comparison is examined at two distinct analytical levels. The principal annual experiment uses the \emph{causal daily greedy schedule} (CDG). On calendar day $t$, both policies observe exactly the same invoices already issued and not yet expired, execute their respective deterministic candidate rules to a daily fixed point, and carry residual records forward. CDG therefore represents the operational question faced by a live clearing service: how much settlement relief can each move set realize when future arrivals are unknown and current decisions irreversibly consume invoice capacity? CDG is feasible without future-information leakage, but it is not claimed to be globally optimal.

The annual evidence is generated as one rolling causal state stream for each policy rather than as independently restarted yearly experiments. January and February records of year $y$ first become visible during the terminal bridge of cohort $y-1$. Any part consumed there is permanently removed; only the residual balance continues into the remainder of year $y$, together with records issued from March through December. At the next boundary, residual year-$y$ records meet the actual January--February arrivals of year $y+1$. After that bridge, unresolved year-$y$ records are closed for cohort accounting and only residual year-$(y+1)$ records continue. Every source identifier is therefore introduced once, cumulative consumption can never exceed its original amount, and no bridge balance is restored when the next annual row is calculated.

The denominator for cohort $y$ remains the original mass of all invoices issued in $y$, including its January--February records. PMR generated by a mixed-year operation is partitioned across issue cohorts from the consumed source fragments, so one physical operation is neither omitted nor counted twice. The rolling design therefore measures late-year clearing opportunity without restoring bridge balances or allowing the same invoice mass to support two annual results.

A second and deliberately separate level uses full-information optimization on tractable subgraphs. There the complete set of invoice arrivals and maturity intervals is available to continuous path-only, cycle-only, and mixed linear programs. These models are not alternative annual implementations and are not used to replace the causal results. They provide optimization references for a different question: how much PMR could the respective move sets support when temporal foresight and joint allocation are available? Because the realized CDG sequence is itself a feasible schedule, a genuine full-information optimum must weakly dominate it. The tractable optimization experiments therefore provide a principled reference for heuristic gaps and move-set complementarity rather than a competing operational regime.

A meaningful comparison at either level requires temporal feasibility to be defined on the underlying obligations rather than on an annual aggregate graph. Each retained source invoice is therefore represented as an atomic residual record with its own debtor, creditor, issue date, due date, amount, status, and stable identifier. For edge $(u,v)$ on day $t$, available capacity is the sum of residual amounts of records that have already been issued and have not passed their due dates. For any candidate path or circuit $F$, executable capacity on day $t$ is $\delta_F(t;s)=\min_{e\in F}c_e(t;s)$, and the maximum common-day capacity under full information is $\delta_F^*(s)=\max_{t\in\mathcal T}\delta_F(t;s)$. Thus, an operation can use only invoice fragments that coexist on a common execution day. The same atomic accounting is used in CDG with $t$ fixed to the current calendar day. This common representation makes the causal and optimization analyses comparable while preserving source-fragment provenance for independent replay.

The causal setting is also the relevant starting point for future decentralized implementation, but CDG itself should not be confused with an agent architecture. CDG is centralized: it observes the platform-wide active graph, ranks available candidates, and synchronizes execution by day. The path primitive is more local than that scheduler. A fixed $A\rightarrow B\rightarrow C$ proposal is supported by two incident obligation relationships and involves at most three firms, whereas a length-$k$ circuit requires discovery and coordination across $k$ supporting relationships. A future agent-based system could therefore use the same atomic common-day predicate, accounting rules, authorization constraints, and replay logic through asynchronous local proposal, reservation, consent, and commit processes without reproducing CDG's global ranking mechanism. Whether this locality ultimately yields lower communication cost, higher acceptance, stronger privacy, or better decentralized performance remains a research question rather than an assumption of the present experiments.

The paper addresses four questions:
\begin{enumerate}[label=RQ\arabic*.,leftmargin=2.2em]
\item Under the same rolling causal schedule, how much issue-cohort PMR does the evaluated path policy achieve relative to complete-candidate length-eight cycle netting across 2012--2023?
\item How sensitive is the result to greedy ordering, cycle-length bounds, mixed path--cycle execution, reciprocal cancellation, strongly connected structure, network concentration, payer acceleration, and bridge length?
\item What do tractable full-information optimization models reveal about heuristic gaps and the relative or complementary value of the two move sets?
\item Why is the local path primitive a promising basis for future asynchronous agent-to-agent clearing, and what must be tested before a decentralized advantage can be claimed?
\end{enumerate}

The contributions are fourfold. First, the paper gives an atomic source-record model with exact common-day capacity and an accounting theorem that distinguishes invoice compression from post-instruction PMR. Second, it provides a like-for-like causal daily comparison with PMR-aligned candidate ranking, mixed-policy and randomized-order sensitivities, cycle-length analysis, and full-information linear bounds on 256 tractable instances. Third, it evaluates a longitudinal rolling sequence in which every invoice enters once, January--February residuals carry across annual boundaries without restoration, and mixed-year PMR is attributed at fragment level; the empirical analysis also covers SCC location, reciprocal paths, acceleration, component concentration, and provenance. Fourth, it supplies independent source-fragment replay and a public reference implementation, then develops a bounded research agenda for event-driven multi-agent execution without claiming an evaluated prototype.

The principal claim is policy-specific. Across the rolling 2012--2023 sequence, path-enabled clearing reduces 48.202\% of issue-year mass and complete-candidate length-eight cycle netting reduces 43.347\%, an advantage of \EUR4.841 billion and 4.855 percentage points. The path policy leads materially in ten cohorts, the 2013 difference is a practical tie at $-0.0015$ percentage points, and cycle netting leads in the small 2012 cohort. The complete-bridge 2012--2022 subset gives the same aggregate direction. The magnitude depends on ordering, topology, source coverage, acceleration, legal discharge, and workload; it is not a theorem that paths globally dominate cycles or that PMR equals liquidity, welfare, or financing-cost savings.

\section{Related literature and positioning}\label{sec:literature}

\subsection{A connected survey of the problem}\label{sec:survey}

The literature relevant to invoice clearing is broad because the problem sits between accounting conservation, temporal scheduling, network structure, and governed execution. These traditions are often discussed separately, yet they can be read as successive answers to five questions. What quantity must be conserved? Which underlying claims are changed? When are those claims simultaneously available? What local or global transformation is permitted? Finally, who can authorize, verify, and reverse the resulting settlement? A model that answers only the first question may simplify a liability matrix but remain too abstract for invoice operations. A model that answers all five must connect source records, time, accounting, optimization, and institutional control.

The historical trend is therefore not simply from small to large networks or from heuristics to optimization. It is a movement from balance-level transformations toward claim-level and event-level execution. Financial-network models provide conservation and systemic interpretation. Production-network studies explain why interfirm obligations matter operationally. Debt-clearing algorithms turn balances into named operations. Temporal graphs prevent infeasible combinations across time. Supply-chain finance supplies the managerial and legal context. Digital and agentic systems add discovery, authorization, evidence, and distributed coordination. The contribution of this paper is to make those layers meet at one executable primitive: an atomic, common-day path or cycle operation whose source fragments, accounting effect, and replay evidence are explicit.

This synthesis also clarifies why an invoice network is not adequately characterized by a single weighted adjacency matrix. It is a directed temporal multigraph in which many records may support one ordered firm pair, active intervals differ, edge values are highly concentrated, and consuming one source record changes capacity on every day of its remaining interval. The network combines sparse topology with dense economic dependence on a small number of firms and edges. Cyclic structure matters, but so do temporal overlap, record multiplicity, maturity dispersion, validation status, and competition for bottleneck invoices. The empirical characterization in Section~\ref{sec:topology} is designed around these properties rather than around density alone.

\subsection{Debt simplification, financial clearing, and production networks}\label{sec:financial-clearing}

Verhoeff's formulation of multiple-debt settlement isolates the central conservation idea: gross obligations may be transformed while each participant's final balance remains unchanged \cite{verhoeff2004}. That insight is foundational for the present paper, but an undated balance matrix does not say which invoice is reduced, whether two claims coexist, or what legal record replaces a redirected obligation. The proposed model retains Verhoeff's conservation requirement while moving it to the combined state of residual invoices and generated settlement instructions.

Eisenberg and Noe established the canonical fixed-point model of clearing payments under limited liability and proportional repayment \cite{eisenberg2001}. Their framework asks how much is paid when firms or banks cannot meet all nominal liabilities, and it provides a rigorous system-level notion of payment feasibility. The present problem is earlier in the lifecycle: it seeks to compress solvent, due-dated trade obligations before default rather than allocate losses after insolvency. Rogers and Veraart show how bankruptcy costs and rescue possibilities alter clearing outcomes and uniqueness \cite{rogers2013}; this reinforces the point that institutional rules belong inside the model rather than being treated as implementation detail.

Elliott, Golub, and Jackson demonstrate that the pattern and concentration of financial interdependence shape the propagation of distress \cite{elliott2014}. Their analysis helps explain why gross bilateral obligations cannot be evaluated independently, even when every invoice is legally separate. Acemoglu, Ozdaglar, and Tahbaz-Salehi add an important non-monotonic perspective: greater interconnection can absorb sufficiently small shocks yet amplify larger disturbances once network losses cross critical thresholds \cite{acemoglu2015}. Glasserman and Young subsequently organize the major contagion channels and emphasize that network claims about stability depend on the mechanism being studied \cite{glasserman2016}. For invoice clearing, this literature supports network-level measurement but also cautions against interpreting PMR as a direct measure of solvency or systemic-risk reduction.

Governance enters even before default. Csoka and Herings show that decentralized clearing arrangements can produce different outcomes despite respecting common accounting constraints \cite{csoka2018}. Their contribution is especially relevant here because a mathematically admissible path is not automatically an accepted settlement. Consent, information, strategic timing, and the organization of proposal rights can change which feasible operations are realized. This paper therefore separates the deterministic economic kernel from the future decentralized mechanism that might implement it.

Production-network research supplies the operational counterpart to financial-network theory. Atalay and co-authors document a sparse, heterogeneous production structure with substantial variation in supplier and customer relationships \cite{atalay2011}. That empirical regularity anticipates the coexistence found here of a giant weakly connected trading backbone, a smaller cyclic core, and highly concentrated edge values. Carvalho shows how microeconomic disturbances can aggregate through production linkages rather than disappear through diversification \cite{carvalho2014}. Barrot and Sauvagnat identify input specificity as a reason why shocks transmit more strongly along some supplier relationships than others \cite{barrot2016}; invoice amounts alone therefore cannot capture the operational importance of every edge.

Trade-credit studies move closer to the paper's economic object. Boissay and Gropp model interfirm payment defaults and endogenous liquidity provision, explaining why a delayed receivable may constrain another firm's ability to pay \cite{boissay2013}. Their analysis treats trade credit as a propagation and insurance mechanism, whereas the present paper asks whether coordinated discharge can shorten the chain through which cash would otherwise travel. Jacobson and von Schedvin provide firm-level evidence that distress propagates through trade-credit links \cite{jacobson2015}. Ersahin, Giannetti, and Huang connect trade credit to supply-chain stability and show that payment arrangements affect resilience rather than only financing cost \cite{ersahin2024}. Acemoglu and Tahbaz-Salehi extend the production-network perspective to supply-chain disruptions and macroeconomic dynamics \cite{acemoglu2025}. Together, these works justify treating the invoice network as an industrial system with temporal consequences, while leaving open the operation-level clearing problem addressed here.

The resulting invoice graph has four distinguishing features. First, it is a multigraph: several invoices with different dates can support the same debtor--creditor edge. Second, it is interval-valued: a record is available only between issue and due dates. Third, it is stateful: every consumption changes future common-day capacity. Fourth, it is institutionally typed: validation, duplicate status, legal eligibility, and outstanding balance can determine whether a mathematically feasible fragment is executable. These features explain why the paper works from atomic records and reports topology, concentration, temporal overlap, and residual-state evolution together.

\subsection{Operational clearing, bounded circuits, and global flow}\label{sec:operational-clearing}

The closest operational comparator is the Romanian circuit-clearing architecture studied by Gavrila and Popa \cite{gavrila2021}. Their procedure constructs a company graph, identifies strongly connected components, enumerates elementary circuits, and chooses an implementation order. This is more operationally concrete than a balance-only flow because each proposed circuit has identifiable participants and a common amount. It also exposes the central limitation investigated here: closure is both the source of accounting simplicity and a restriction that excludes open chains. Increasing the length bound enlarges the cyclic search space, but it does not convert a non-cyclic intermediary relation into an eligible operation.

Kumlander's practical algorithms make ordering a first-class issue rather than a coding afterthought \cite{kumlander2010}. His later evolutionary treatment explores how alternative sequences of feasible clearings can alter terminal reduction \cite{kumlander2012}. Patcas develops graph transformations for the debt-clearing problem and emphasizes that different local rewrites may preserve balances while producing different residual structures \cite{patcas2011}. Patcas and Bartha then use evolutionary search to improve the ordering of such transformations \cite{patcasbartha2019}. These studies anticipate a central empirical fact of the present paper: the move set and the schedule interact, so a policy comparison must align economic ranking objectives and test ordering sensitivity.

Classical graph algorithms provide the cycle comparator's computational skeleton. Tarjan's linear-time decomposition isolates strongly connected regions in which directed cycles can occur \cite{tarjan1972}. Johnson's output-sensitive algorithm enumerates elementary circuits without repeatedly rediscovering the same cycle \cite{johnson1975}. These results make complete bounded-candidate enumeration possible, but they do not choose a welfare or PMR-maximizing settlement sequence. The present benchmark therefore distinguishes candidate completeness from schedule optimality.

Network-flow theory supplies a broader optimization language. Ahuja, Magnanti, and Orlin develop circulation, transshipment, and minimum-cost formulations that can route quantities globally subject to conservation and capacity constraints \cite{ahuja1993}. Such models are valuable as bounds because they reveal what simultaneous allocation could achieve. Their flexibility is also a source of distance from invoice operations: a flow solution may combine routes, counterparties, and timing decisions that do not correspond to separately named and consented transactions.

The payment-netting literature illustrates this contrast in a managerial setting. Shapiro showed early that network structure can reduce the gross payments required in international cash management \cite{shapiro1978}. Srinivasan and Kim recast the problem as network optimization and demonstrated the value of coordinated rather than pairwise payment decisions \cite{srinivasan1986}. Guntzer, Jungnickel, and Leclerc developed efficient algorithms for clearing interbank payments, where centralized control and homogeneous settlement rules make global methods more plausible \cite{guntzer1998}. Trade invoices add heterogeneous due dates, source provenance, disputes, and counterparty consent, which is why the paper retains named local operations even while using linear programs as full-information references.

Bounded coordination is not computationally innocuous. Abraham, Blum, and Sandholm show that length-constrained exchange problems become combinatorial even when every selected cycle is operationally meaningful \cite{abraham2007}. The present study accordingly separates three layers: complete enumeration of circuits through a stated bound, deterministic causal scheduling on the full annual graph, and continuous full-information optimization on tractable subgraphs. The empirical claim concerns the first two; the third diagnoses heuristic gaps and path--cycle complementarity without being extrapolated into an annual optimum.

\subsection{Temporal simultaneity, supply-chain finance, digital execution, and modern agentic AI}\label{sec:temporal-scf}

Temporal-network research explains why an annual graph can be topologically correct and operationally false. Holme and Saramaki show that aggregating time-stamped interactions can erase order, duration, and burstiness that determine what can actually propagate \cite{holme2012}. In invoice clearing, the analogous error is to combine obligations that exist in the same year but never overlap before maturity. The paper's common-day rule is therefore not cosmetic date filtering; it changes the feasible capacity of every candidate.

Casteigts and co-authors formalize time-varying graphs as graphs whose edge availability changes over time \cite{casteigts2012}. Their framework separates the footprint of all possible edges from the temporal graph that exists at a particular moment. Michail surveys temporal graph algorithms and distinguishes static reachability from journeys whose edges appear in a valid time order \cite{michail2016}. Invoice clearing imposes an even stronger synchronization requirement than a temporal journey: all supporting edges must offer the same amount on one day. The quantity $\delta_F^*(s)=\max_t\min_{e\in F}c_e(t;s)$ is thus an amount-valued certificate of simultaneous feasibility, not simply a time-respecting path test.

Supply-chain finance provides the managerial frame but pursues a different intervention. Pfohl and Gomm define supply-chain finance as coordinated planning and control of financial flows across organizational boundaries \cite{pfohl2009}. Their perspective is broad enough to include clearing, yet most subsequent instruments focus on funding, discounting, or risk transfer. Gelsomino and co-authors map the field's finance-oriented and supply-chain-oriented traditions and show that terminology often masks different objectives and actors \cite{gelsomino2016}. That distinction matters here: PMR measures a reduction in gross settlement burden, not an injection of cash.

Reverse factoring provides a useful contrast. Liebl, Hartmann, and Feisel examine its objectives, antecedents, and implementation barriers from a supply-chain perspective \cite{liebl2016}. The instrument can accelerate supplier cash by using the buyer's credit standing, but it introduces a financier and contractual infrastructure. Lekkakos and Serrano find operational benefits for small and medium-sized suppliers while also emphasizing implementation conditions \cite{lekkakos2016}. Wehinger places such instruments in the wider problem of constrained SME finance \cite{wehinger2014}. Path-enabled clearing is complementary: it can reduce pass-through payment requirements before residual obligations are financed, but it cannot replace credit where net liabilities remain.

Digital-settlement research addresses trust, provenance, and execution. Battaiola and co-authors translate invoice-factoring requirements into blockchain commitments, drawing attention to authentication and enforceable workflow states \cite{battaiola2019}. Mohammadzadeh, Dorri Nogoorani, and Munoz Tapia focus on invoice registration and the prevention of duplicate financing \cite{mohammadzadeh2021}. Ioannou and Demirel review blockchain and supply-chain finance across operations, finance, and law, showing why technical immutability does not settle questions of assignment, liability, or acceptance \cite{ioannou2022}. Babich and Hilary make the corresponding operations-management point: distributed ledgers change information and coordination possibilities, but they do not determine the economic optimization model placed on top of the ledger \cite{babich2020}.

Privacy-preserving work comes closer to network clearing. Bottazzi, Ngo, and Tsutsumi formulate multilateral trade-credit set-off using secure computation, graph anonymization, and network simplex methods \cite{bottazzi2024}. Their contribution shows that useful coordination may be possible without revealing the complete credit graph, but the model is oriented toward secure global computation rather than atomic common-day local moves. Buchman and co-authors describe a peer-to-peer multilateral clearing architecture in which participants coordinate without a conventional central operator \cite{buchman2024}. The architectural ambition is relevant, while source-record timing, path redirection, and fragment-level replay remain distinct contributions of the present paper.

Only a selected part of the modern agentic-AI literature is directly relevant. Guo and co-authors survey LLM-based multi-agent systems through agent roles, communication, coordination, and evaluation \cite{guo2024multiagents}. Their taxonomy is useful because it makes clear that fluent interaction is not the same as reliable economic execution: an agent system still needs a state model, verifiable actions, and domain-specific outcome metrics. Xu, Mak, Minaricova, and Brintrup move from general orchestration to the supply-chain domain by proposing a design methodology, system architecture, toolkit, and case study for autonomous supply chains \cite{xu2024implementation}. Their work establishes that firm-level agents can be embedded in industrial decision processes, but it does not define how due-dated financial claims should be transformed or discharged.

Recent interoperability standards provide components for such an implementation. The Agent2Agent protocol defines capability discovery, task lifecycles, asynchronous updates, and collaboration between independently implemented and internally opaque agents \cite{a2a2025}. This is well matched to discovering and tracking a clearing proposal, but A2A does not supply invoice capacity semantics, reservations over monetary records, or legal discharge. The Model Context Protocol standardizes how an agent host obtains controlled access to resources and tools, with explicit attention to authorization, consent, and tool safety \cite{mcp2025}. It could expose a firm's invoice vault, policy engine, or deterministic common-day verifier without turning those controls over to the language model; it is not itself a protocol for interfirm commitment.

The Agent Payments Protocol provides the closest modern analogue to the authorization layer required here. AP2 uses cryptographic mandates, receipts, and verifiable evidence to bind delegated intent to agent-initiated payment actions \cite{ap22026}. That architecture supports a crucial separation also adopted in this paper's research agenda: agents may discover and negotiate, while deterministic credentials and verifiers control commitment. AP2 is built around commerce and payment authorization, however, not the multi-claim discharge and issue-cohort accounting of invoice clearing. AgenticPay complements the protocol view with an evaluation framework for language-mediated buyer--seller negotiation under private constraints \cite{agenticpay2026}. Its results expose persistent failures in long-horizon strategy and constraint satisfaction, suggesting that future clearing agents should be assessed through feasibility, efficiency, fairness, timeout, and violation metrics rather than conversational quality alone.

These modern works define an emerging stack: domain agents, tool access, inter-agent messaging, mandates, and negotiation benchmarks. The paper contributes a missing economic kernel for that stack. It specifies the exact state to be committed, the common-day capacity that agents may propose, the accounting transformation created by acceptance, the source-fragment evidence needed for replay, and a causal centralized benchmark against which an asynchronous implementation can be measured. Section~\ref{sec:decentralized-agenda} develops that connection as a research agenda rather than claiming a prototype.

\subsection{Differentiation and novelty}\label{sec:positioning}

Table~\ref{tab:positioning} clarifies the distinction among the principal model classes. The strongest novelty is the combination of atomic dated claims, exact common-day amount selection, an explicit instruction layer, source-level replay, and a longitudinal like-for-like comparison under causal daily scheduling.

\begin{table}[htbp]
\centering
\caption{Positioning of the proposed method relative to major model classes. A check denotes a defining feature rather than a claim that every work in the class implements it identically.}
\label{tab:positioning}
\footnotesize
\renewcommand{\arraystretch}{1.25}
\setlength{\tabcolsep}{2.4pt}
\begin{tabularx}{\textwidth}{>{\raggedright\arraybackslash}p{0.20\textwidth}*{7}{>{\centering\arraybackslash}X}}
\toprule
Model class & Atomic dated claims & Common-day amount & Open two-edge path & Explicit instruction layer & Named local operation & Global optimization & Invoice-plus-instruction conservation \\
\midrule
Financial clearing/default models & -- & -- & sometimes & -- & -- & often & balance-level \\
Debt simplification/network flow & rarely & -- & often & implicit & usually -- & often & balance-level \\
Operational bounded-cycle netting & sometimes & rarely & -- & -- & yes & heuristic or combinatorial & invoice layer \\
Temporal-network reachability & \shortstack{time-\\stamped} & temporal order & \shortstack{not an\\accounting\\move} & -- & \shortstack{path/\\journey} & varies & -- \\
Proposed method & yes & yes & yes & yes & yes & heuristic plus LP bounds & yes \\
\bottomrule
\end{tabularx}
\end{table}

The paper therefore does not claim to invent debt simplification, temporal graphs, or distributed settlement. Its contribution is a specific operational integration: source-record-exact common-day capacity; a triadic redirection primitive with explicit accounting; comparable PMR objectives for paths and cycles; and evidence on how those policies behave across heterogeneous annual invoice networks.

\section{Atomic temporal model and accounting}\label{sec:model}

\subsection{Atomic records and edge-day capacity}\label{sec:atomic-records}

Let $V$ be the set of firms. Each retained source row is an atomic invoice record
\begin{equation}
 i=(u_i,v_i,x_i,\alpha_i,\tau_i,\sigma_i,\eta_i),
\label{eq:record}
\end{equation}
where $u_i\in V$ is the debtor, $v_i\in V$ the creditor, $x_i>0$ the original amount, $\alpha_i$ the issue date, $\tau_i\geq\alpha_i$ the due date, $\sigma_i$ a validation indicator, and $\eta_i$ a stable source-row identifier. The residual amount at state $s$ is $r_i(s)\in[0,x_i]$.

The aggregate topology $G(s)=(V,E(s))$ contains edge $(u,v)$ when at least one record from $u$ to $v$ has positive residual. Aggregation is used for candidate discovery only. For calendar day $t$, active residual edge capacity is
\begin{equation}
 c_{uv}(t;s)=\sum_{i:u_i=u,\,v_i=v} r_i(s)\mathbf{1}\{\alpha_i\leq t\leq\tau_i\}.
\label{eq:edge-day-capacity}
\end{equation}
The annual implementation evaluates every integer calendar day in the observation horizon. Capacity changes only at issue dates and the day after due dates; the tractable linear programs therefore use those event days as an exact compressed representation.

Residual invoice mass is $D(s)=\sum_i r_i(s)$. For firm $v$, let $I_s(v)$ and $O_s(v)$ be incoming and outgoing residual mass and $b_s(v)=I_s(v)-O_s(v)$ its invoice-layer net position.

\subsection{Exact common-day capacity}\label{sec:common-day}

For a fixed path or circuit with edge set $F$, matched capacity on day $t$ is
\begin{equation}
 \delta_F(t;s)=\min_{e\in F}c_e(t;s).
\label{eq:day-capacity}
\end{equation}
The maximum amount executable on one common day is
\begin{equation}
 \delta_F^*(s)=\max_{t\in\mathcal T}\delta_F(t;s),
\label{eq:common-day-capacity}
\end{equation}
with the earliest maximizing day used as a deterministic tie-breaker.

\begin{proposition}[Exact fixed-operation capacity]\label{prop:exact-capacity}
For fixed $F$ and state $s$, $\delta_F^*(s)$ is the largest amount assignable to one operation day using only residual source records active on every edge in $F$.
\end{proposition}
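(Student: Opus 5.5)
The plan is to prove the two directions that together make $\delta_F^*(s)$ the maximum: every amount $q\le\delta_F^*(s)$ is executable on one day, and no larger amount is. First I will formalize "assignable to one operation day". An operation of amount $q$ on day $t$ is a family of consumption amounts $y_i\in[0,r_i(s)]$. These are indexed by records $i$ supporting edges of $F$. They must satisfy $y_i=0$ unless $\alpha_i\le t\le\tau_i$, and they must satisfy $\sum_{i:(u_i,v_i)=e}y_i=q$ for every $e\in F$. The last condition says the common amount is matched on every supporting edge. The claim is then that $\sup\{q:\text{such }(t,y)\text{ exists}\}=\delta_F^*(s)$ and that this supremum is attained.

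For the upper bound, I fix any feasible $(t,y)$ with amount $q$. For each $e\in F$,
$$q=\sum_{i\to e}y_i\le\sum_{i\to e}r_i(s)\mathbf{1}\{\alpha_i\le t\le\tau_i\}=c_e(t;s),$$
by \eqref{eq:edge-day-capacity}. Hence $q\le\min_{e\in F}c_e(t;s)=\delta_F(t;s)\le\delta_F^*(s)$. This step is immediate from the definitions \eqref{eq:day-capacity}--\eqref{eq:common-day-capacity}.

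For achievability, I take the earliest maximizing day $t^*$. The maximum over $\mathcal T$ exists because $\mathcal T$ is finite: the horizon consists of integer days, or equivalently of the finitely many event days on which capacity changes. Set $q^*=\delta_F(t^*;s)$. On each edge $e$ I have $c_e(t^*;s)\ge q^*$, so I can allocate $q^*$ across the active records on $e$ greedily in any fixed order, for instance by identifier $\eta_i$. Each record takes $y_i=\min\{r_i(s),\,q^*-\text{already allocated}\}$. This yields $0\le y_i\le r_i(s)$, uses only records active on $t^*$, and sums to exactly $q^*$. The same construction with $q^*$ replaced by any $q\le q^*$ shows that every smaller amount is also assignable.

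The main subtlety is the circuit case in which $F$ repeats an edge, or in which distinct edges share records. Neither can happen for a simple path or an elementary circuit, because each record belongs to a unique ordered pair $(u_i,v_i)$. I will therefore note that the allocations on different edges are disjoint, so they do not compete for the same residual. I will also note that the construction uses exactly one day, matching the "one operation day" wording. Multi-day splitting is excluded by definition, which is why the result concerns $\max_t\min_e$ rather than a sum over days.
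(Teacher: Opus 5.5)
Your proof is correct and follows the paper's argument: bound any common amount on a day by each edge's active capacity $c_e(t;s)$, take the minimum over $e\in F$ and then the maximum over days, and attain $\delta_F^*(s)$ at a maximizing day by consuming that amount from active records. Your explicit allocation variables, the finiteness of $\mathcal T$, and the remark that distinct edges of $F$ draw on disjoint records make precise steps that the paper leaves implicit.
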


\noindent\textit{Proof.} On day $t$, no edge can supply more than $c_e(t;s)$, so every common amount is bounded by $\min_{e\in F}c_e(t;s)$. Maximizing over days gives an upper bound. At a maximizing day, each edge has at least $\delta_F^*(s)$ active residual capacity, so consuming that amount from active records constructs a feasible operation. \hfill$\square$

The interval condition
\begin{equation}
 \max_i\alpha_i\leq\min_i\tau_i
\label{eq:interval-characterization}
\end{equation}
is retained only as the equivalent non-empty-intersection characterization of a particular selected collection of fragments. Equation~\eqref{eq:common-day-capacity} is the operative amount-and-day rule.

When an operation is executed, records on each edge are consumed by earliest due date, earliest issue date, and source-row identifier. Splitting is allowed, but every consumed fragment must be active on the logged day.

\subsection{Settlement state, PMR, and legal interpretation}\label{sec:accounting}

A run maps initial state $s_0$ to terminal residual state $s_T$. Non-bilateral path operations also generate a settlement-instruction multiset $P$. If $P_{in}(v)$ and $P_{out}(v)$ are instruction mass entering and leaving $v$, the combined terminal net position is
\begin{equation}
 b_T^*(v)=I_{s_T}(v)+P_{in}(v)-O_{s_T}(v)-P_{out}(v).
\label{eq:combined-net}
\end{equation}
A valid run preserves $b_T^*(v)=b_{s_0}(v)$ for all firms.

Post-instruction payable mass is
\begin{equation}
 \Omega_T=D(s_T)+P_{tot},\qquad P_{tot}=\sum_{p\in P}\mathrm{amount}(p),
\label{eq:post-settlement}
\end{equation}
and the primary outcome is
\begin{equation}
 \PMR=D(s_0)-\Omega_T.
\label{eq:pmr}
\end{equation}
Invoice-layer compression is $\IC=D(s_0)-D(s_T)$. Cycle netting creates no instruction, so $\PMR=\IC$. A non-bilateral path of amount $q$ has $\IC=2q$ and $\PMR=q$; a reciprocal path has $\PMR=\IC=2q$.

\begin{proposition}[Path accounting]\label{prop:path-accounting}
Every path operation preserves combined net positions. A non-bilateral operation of amount $q$ reduces post-instruction payable mass by $q$; a reciprocal operation reduces it by $2q$.
\end{proposition}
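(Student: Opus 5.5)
The proof is a direct local bookkeeping argument on a single operation, followed by induction over the run. I would fix a state $s$ and a path operation on $A\rightarrow B\rightarrow C$ with amount $q\le\delta_F^*(s)$. By Proposition~\ref{prop:exact-capacity}, $q$ can be consumed from records active on the logged day on both edges $(A,B)$ and $(B,C)$. The only changes are then a reduction of $q$ in $O(A)$ and in $I(B)$ from the first leg, and a reduction of $q$ in $O(B)$ and in $I(C)$ from the second leg. When $A\neq C$ there is also a new instruction of amount $q$ from $A$ to $C$, which raises $P_{out}(A)$ and $P_{in}(C)$ by $q$.

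I would then check the combined net position \eqref{eq:combined-net} node by node in the non-bilateral case.
\begin{itemize}[leftmargin=1.5em]
\item For $A$, invoice outflow falls by $q$ and instruction outflow rises by $q$, so the change is $+q-q=0$.
\item For $B$, inflow and outflow each fall by $q$, so the change is $0$.
\item For $C$, invoice inflow falls by $q$ and instruction inflow rises by $q$, so the change is $0$.
\item Every other firm is untouched.
\end{itemize}
In the reciprocal case $C=A$, no instruction is created. Then $A$ loses $q$ of outflow and $q$ of inflow, and $B$ likewise, so both changes are zero. Since every single operation leaves the combined position unchanged, induction over the run sequence gives $b_T^*(v)=b_{s_0}(v)$. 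At $s_0$ there are no instructions, so the combined position coincides with the invoice-layer position.

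For the mass statements, I would use $\Omega=D+P_{tot}$ from \eqref{eq:post-settlement}. In the non-bilateral case, $D$ falls by $2q$ because two legs are each reduced by $q$, and $P_{tot}$ rises by $q$, so $\Omega$ falls by $q$. In the reciprocal case, $D$ falls by $2q$ and $P_{tot}$ is unchanged, so $\Omega$ falls by $2q$. Summing over operations recovers the per-operation contributions to $\PMR$ in \eqref{eq:pmr}.

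The only step needing care is the legality of the consumption. The residuals must stay in $[0,x_i]$, and the split fragments must be active on the logged day. Both follow from Proposition~\ref{prop:exact-capacity} together with the earliest-due-date consumption rule, since the total consumed on each edge equals $q\le c_e(t;s)$. I expect no substantive obstacle beyond keeping the bilateral/non-bilateral case split explicit.
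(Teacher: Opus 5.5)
Your proof is correct and matches the paper's argument: the paper also checks that $A$'s invoice outflow and $C$'s invoice inflow each fall by $q$ and are replaced by the $A\rightarrow C$ instruction, that $B$ loses equal inflow and outflow, and that residual mass falls by $2q$ while instruction mass rises by $q$, with no instruction when $A=C$. The induction over the run and the feasibility remark go beyond what the paper's proof states; note that legality on the logged day $t$ needs $q\le\delta_F(t;s)$ (which CDG guarantees by construction), not merely $q\le\delta_F^*(s)$, though the accounting identity does not depend on this.
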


\noindent\textit{Proof.} For $A\neq C$, invoice outflow of $A$ and invoice inflow of $C$ each fall by $q$, while an $A\rightarrow C$ instruction of $q$ replaces them. At intermediary $B$, one inflow and one outflow both fall by $q$. Residual mass falls by $2q$ and instruction mass rises by $q$. When $A=C$, opposite invoice legs fall without an instruction. \hfill$\square$

The empirical PMR interpretation assumes \emph{consented discharge upon performance}: participants authorize the redirected payment, and performance discharges the matched invoice portions. Under instruction-only mode, the algorithm identifies proposed compression but legal discharge is realized only after payment. Under novation, the same accounting identity holds but counterparty exposure changes and requires separate legal and credit analysis. Net-position preservation alone does not establish tax, insolvency, sanctions, KYC, or finality equivalence.

\subsection{Causal feasibility and the full-information optimization reference}\label{sec:dominance}

CDG is the sole annual execution policy studied in the main experiment. Its chronological operation sequence is source-record feasible: every consumed fragment is active on its execution day, no record is overconsumed, and all accounting identities hold. Let $\mathcal S_F$ denote all feasible schedules for a fixed move set over the complete horizon.

\begin{proposition}[Full-information optimum weakly dominates CDG]\label{prop:dominance}
For fixed source records, horizon, move set, and accounting objective,
\begin{equation}
 \max_{S\in\mathcal S_F}\PMR(S)\geq \PMR(\CDG).
\label{eq:dominance}
\end{equation}
\end{proposition}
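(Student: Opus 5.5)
The argument is the standard membership argument: the left-hand side is a maximum over $\mathcal S_F$, so it suffices to show that the schedule produced by CDG belongs to $\mathcal S_F$ and that its objective value is computed by the same PMR functional. I would first make precise what a schedule is. A schedule $S$ is a finite chronologically ordered list of operations. Each operation is a move $F$ from the fixed move set (a two-edge path or a circuit within the stated bound), together with an execution day $t$, an amount $q>0$, and an assignment of consumed source fragments to each edge of $F$. The schedule is feasible when the following hold along its sequence of states $s_0,s_1,\dots$:
\begin{itemize}
\item every fragment assigned to an operation on day $t$ is active on $t$, meaning $\alpha_i\le t\le\tau_i$;
\item each edge receives fragments totalling exactly $q$;
\item no record's cumulative consumption exceeds $x_i$.
\end{itemize}
With this definition, $\PMR(S)=D(s_0)-\Omega_T$ is a deterministic function of $S$ via \eqref{eq:post-settlement} and \eqref{eq:pmr}.

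Next I would verify that CDG meets these conditions operation by operation, by induction along its chronological log. On day $t$, CDG executes a candidate $F$ with amount at most $\delta_F(t;s)\le c_e(t;s)$ for every $e\in F$. By the argument of Proposition~\ref{prop:exact-capacity} with $t$ fixed, enough active residual record mass exists on each edge. The consumption rule (earliest due date, then issue date, then identifier) draws only from active records and never below zero residual. Hence the induction hypothesis, that the current state is a valid residual state, is preserved. Proposition~\ref{prop:path-accounting} shows that each path operation yields the instruction and PMR contribution that the accounting in Section~\ref{sec:accounting} assigns; cycle operations have $\PMR=\IC$. It follows that $\PMR(\CDG)$ is precisely the objective value of this schedule regarded as an element of $\mathcal S_F$.

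The main obstacle is making sure the two sides really refer to the same objects. Three things need checking:
\begin{itemize}
\item The rolling-state and bridge conventions of CDG must not use a record outside the fixed source set or horizon.
\item The ranking must be purely causal bookkeeping, not a constraint that lies outside $\mathcal S_F$.
\item $\mathcal S_F$ must permit splitting and arbitrary ordering, because otherwise the CDG sequence might not be representable in it.
\end{itemize}
Since $\mathcal S_F$ is defined with no information restriction, causality only shrinks the set of choices and never adds to it. With these points settled, I would also note that $\mathcal S_F$ is nonempty (it contains the empty schedule) and that $\PMR$ is bounded above by $D(s_0)$. So the supremum exists, and it is attained whenever the problem admits a compact LP formulation over event days. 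If attainment is in doubt, I would state the result with $\sup$. Then $\sup_{S\in\mathcal S_F}\PMR(S)\ge\PMR(\CDG)$ follows immediately because $\CDG\in\mathcal S_F$.
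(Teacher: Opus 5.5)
Your proposal is correct and is essentially the paper's own argument: the CDG operation log is a feasible member of $\mathcal S_F$, evaluated by the same PMR functional, so the maximum over $\mathcal S_F$ cannot be smaller. Your inductive check that each CDG operation consumes only active fragments without overconsumption, and your $\sup$ versus $\max$ remark, spell out details the paper takes as given from the sentence preceding the proposition.
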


\noindent\textit{Proof.} The complete CDG operation sequence is one member of $\mathcal S_F$. The maximum over $\mathcal S_F$ cannot be smaller than the value of that feasible member. \hfill$\square$

The annual-scale optimum is not computed. Continuous full-information linear programs are instead solved on tractable weak components and induced strongly connected samples. They provide upper references for divisible source amounts and reveal heuristic gaps and path--cycle complementarity without being confused with the operational CDG implementation.

\section{Policies, comparators, and optimization benchmarks}\label{sec:algorithms}

\subsection{PMR-aligned path policy}\label{sec:path-policy}

For intermediary $B$, an incoming edge $e^-=(A,B)$ and outgoing edge $e^+=(B,C)$ define a candidate. In CDG on day $t$, its matched amount is $q=\min\{c_{e^-}(t;s),c_{e^+}(t;s)\}$. The candidate score is actual PMR:
\begin{equation}
 \mathrm{score}_P(A,B,C,q)=
 \begin{cases}
 2q,&A=C,\\
 q,&A\neq C.
 \end{cases}
\label{eq:path-score}
\end{equation}
On each day, intermediaries are visited in ascending stable identifier order. At one intermediary, the policy compares the best currently active reciprocal candidate with the best non-bilateral incoming--outgoing pair, selects the larger PMR score, consumes its source fragments, and repeats to a local fixed point before advancing to the next intermediary. Ties use matched amount, bilateral indicator, endpoint identifiers, and edge identifiers. This ordering is deterministic and PMR-aligned but not globally optimal across intermediaries. Comparator fairness requires the economic score above: a reciprocal path produces PMR $2q$, whereas a non-bilateral path produces PMR $q$; the cycle score $kq$ is already PMR-aligned. This PMR-aligned score is used for every annual path result, including the accepted 2021 workbook corpus.

After selection, both edges are consumed from active atomic records using the deterministic rule in Section~\ref{sec:common-day}. Non-bilateral fragments are paired in consumption order and create instruction subrecords with inherited due date $\min\{\tau_{in},\tau_{out}\}$. Instructions are not recycled into candidate discovery; the study therefore evaluates one-stage invoice compression plus an auditable settlement layer. Recycling is a separate recursive-clearing design with different exposure and termination properties.

\subsection{Complete-candidate bounded-cycle policy}\label{sec:cycle-policy}

The comparator constructs the aggregate topology, decomposes it into strongly connected components, enumerates every elementary directed circuit through length $L$, and executes explicit circuit settlements. For circuit $\gamma$ of length $k$ and common-day amount $q$, the score is
\begin{equation}
 \mathrm{score}_C(\gamma,q)=kq,
\label{eq:cycle-score}
\end{equation}
which equals PMR because no instruction is created. The primary bound is $L=8$, following the operational emphasis on short consent coalitions and because marginal CDG gains from $L=8$ to $L=10$ are empirically negligible in the two large sensitivity years. No per-edge record cap is used.

\subsection{Causal daily, mixed, randomized, and LP benchmarks}\label{sec:regimes}

CDG advances one calendar day at a time. On day $t$, only issued records with unexpired due dates are eligible. The cycle policy ranks all currently feasible bounded circuits by PMR. The path policy applies the intermediary-local rule in Section~\ref{sec:path-policy}. Each updates the supporting atomic records and repeats to its defined daily fixed point; residual records carry forward. The policies receive exactly the same causal information, although their candidate structures and deterministic sequencing differ.

Annual execution is a single chronological stream for each method. For the first cohort, records enter on their 2012 issue dates. For every later year $y$, its January--February records have already entered during the bridge of cohort $y-1$. The residuals of those records, and only those residuals, continue from 1 March alongside newly arriving March--December invoices. Residual cohort-$y$ records then interact with January--February arrivals of $y+1$ during the terminal bridge. After the bridge, unresolved cohort-$y$ records are closed and only residual cohort-$(y+1)$ records continue. Consequently, each UID is introduced exactly once and the state transition at every boundary satisfies
\begin{equation}
 x_i=c^{m,\mathrm{bridge}}_i+r^{m,\mathrm{carry}}_i
 \qquad\text{for every bridge record }i\text{ and method }m,
\label{eq:nonreuse}
\end{equation}
where $c^{m,\mathrm{bridge}}_i$ is its physical consumption during the bridge and $r^{m,\mathrm{carry}}_i$ is the residual entering the rest of its own issue year. No consumed amount is reintroduced at face value.

Cohort accounting is distinct from physical state continuity. The denominator for cohort $y$ is all original invoice mass issued in $y$. Cycle PMR is attributed by the issue year of each consumed leg; reciprocal path PMR is attributed by leg; and the one unit of PMR from a non-bilateral path is split symmetrically between its incoming and outgoing consumed fragments. These shares sum exactly to the physical operation PMR. Conservative and liberal mixed-year bounds assign the unit wholly to neither or either cohort leg, respectively.

Three supplementary benchmarks address move-set and ordering concerns. First, a sequential mixed daily policy executes the cycle policy to a daily fixed point and then the path policy, or reverses that order. Second, randomized near-greedy runs choose uniformly from candidates scoring at least 95\% of the current best. Third, on tractable subgraphs a continuous full-information linear program allocates source-record capacity across candidate-day variables for path-only, cycle-only, or mixed move sets. Continuous divisibility makes its objective an upper reference for integer-cent schedules. The LP is not a legal implementation model; it is a diagnostic for heuristic gaps and move-set complementarity.

Randomized stability tests use 40 induced strongly connected samples from 2019, 2020, 2022, and 2023. Fifty seeds are used for each sample and policy. These repetitions test near-greedy ordering sensitivity without treating invoices as independent observations.

\subsection{Computational representation}\label{sec:complexity}

For an annual horizon of at most 404 days, edge-day capacity is stored as an integer array. Exact pair evaluation is $O(|\mathcal T|)$; circuit evaluation is $O(k|\mathcal T|)$. Lazy versioning reevaluates a candidate only when a supporting edge changes. Circuit enumeration is output-sensitive in the number of circuits. The 2020 topology contains 899,363 circuits through length eight; complete enumeration required 15.63 seconds and exact initial common-day screening 0.77 seconds in the research environment. The extension used Python 3.13.5, NetworkX 3.6.1, NumPy 2.3.5, SciPy/HiGHS for tractable LPs, and single-process C++17 for selected large preprocessing tasks on Debian 13. These timings exclude identification, consent, legal checks, payment-rail communication, and failure recovery.

\section{Data, annual cohorts, and validation}\label{sec:data}

\subsection{Source construction and analytical population}\label{sec:data-source}

The analytical sequence covers invoice issue years 2012--2023. The harmonized company-code source contains 792,689 submitted rows for 1995--2023. Records from 1995--2011 are excluded because annual coverage is sparse and discontinuous; the longitudinal 2021 slice is also excluded because the accepted 2021 evidence comes from twelve separately curated monthly workbooks. Selecting 2012--2020 and 2022--2023 from the harmonized source gives 658,405 input rows. Adding 129,158 submitted 2021 workbook rows gives 787,563 analytical input rows. Cleaning retains 749,952 atomic invoices totalling \EUR99.705016 billion.

The institutional source is a commercially sensitive Romanian invoice-reporting and clearing environment. Participation and reporting coverage are not demonstrated to be nationally representative. The paper therefore avoids ``national-scale'' effect claims. The annual cohorts differ sharply in size and source coverage; they are repeated empirical environments, not an IID panel or causal time series.

The rolling sequence uses the actual January and February arrivals following each issue year. Complete two-month follow-up is available through the 2022 cohort. The source supplied for the final 2023 cohort ends on 8 February 2024, yielding a 39-day terminal bridge; the cohort is retained because its full issue year is observed, and the equal-horizon 2012--2022 subset is reported as a coverage sensitivity. Across the 12 boundaries, 116,570 January--February records totalling \EUR16.584 billion are introduced exactly once. Of these, 115,330 records issued in 2013--2023 also belong to their own issue-year denominators; 1,240 records issued in 2024 support the final bridge but lie outside the analytical issue-year range. None is added to the preceding cohort denominator.

Identifier continuity is source-specific but explicit. Harmonized adjacent years share one anonymized company-code namespace. The 2020--2021 boundary uses the accepted 2021 workbooks linked through a one-to-one company crosswalk; the 2021--2022 boundary links 2021 CIFs into the harmonized anonymous-code namespace. Unresolved identifiers remain distinct rather than being guessed. Physical non-reuse and cohort attribution follow Section~\ref{sec:regimes}.

\subsection{Cleaning and provenance}\label{sec:cleaning}

Records are removed for nonnumeric or nonpositive amount, missing or invalid due date, negative maturity, cancellation, missing/self counterparty, or known source overlap. Long contractual maturity is not a deletion criterion. Exact fingerprints are audited but retained in the main corpus where immutable transaction identifiers are unavailable; duplicate-filtered sensitivities remove excess copies by stable first occurrence.

\begin{figure}[htbp]
\centering
\resizebox{\textwidth}{!}{%
\begin{tikzpicture}[node distance=8mm and 9mm, every node/.style={font=\small}, box/.style={draw,rounded corners,align=center,minimum height=9mm,text width=3.4cm}, arrow/.style={-{Latex[length=2mm]},thick}]
\node[box] (source) {Harmonized source\\792,689 rows\\1995--2023};
\node[box,right=of source] (select) {Select 2012--2020, 2022--2023\\exclude sparse early years and longitudinal 2021\\658,405 rows};
\node[box,below=of source] (workbooks) {Accepted 2021 monthly workbooks\\129,158 rows};
\node[box,right=of workbooks] (input) {Analytical input\\787,563 rows};
\node[box,right=of input] (clean) {Cleaning exclusions\\37,611 rows};
\node[box,right=of clean] (final) {Retained annual sequence\\749,952 atomic invoices\\\EUR99.705 bn};
\draw[arrow] (source)--(select);
\draw[arrow] (select)--(input);
\draw[arrow] (workbooks)--(input);
\draw[arrow] (input)--(clean);
\draw[arrow] (clean)--(final);
\end{tikzpicture}}
\caption{Source selection and cleaning flow. The 792,689-row harmonized source is not itself transformed directly into the final corpus because its longitudinal 2021 slice is replaced by the accepted monthly-workbook pipeline.}
\label{fig:cleaning-flow}
\end{figure}

The 37,611 cleaning exclusions comprise 12,290 bad amounts, 19,146 bad due dates, 4,770 negative maturities, 1,270 cancellations, eight missing or self-party records, and 127 known overlaps. Table~\ref{tab:provenance-condensed} summarizes annual provenance; the complete audit appears in Appendix~\ref{app:provenance}.

\begin{table}[htbp]
\centering
\caption{Condensed annual provenance, scale, and bridge coverage. Retained issue-year mass is the PMR denominator. January--February records may first enter during the preceding bridge, but only their residual balances continue into their own year.}
\label{tab:provenance-condensed}
\footnotesize
\setlength{\tabcolsep}{3.0pt}
\begin{tabularx}{\textwidth}{rrrrr>{\raggedright\arraybackslash}p{0.16\textwidth}>{\raggedright\arraybackslash}X}
\toprule
Years & Input & Retained & Mass (bn) & Firms & Source & Bridge and principal comparability limit \\
\midrule
2012--2018 & 18,120 & 16,494 & 2.319 & 138--1,160 & Harmonized & Complete next-year Jan--Feb; sparse legacy coverage \\
2019--2020 & 460,782 & 445,541 & 18.444 & 25,383--63,371 & Harmonized & Complete next-year Jan--Feb; sharp coverage expansion \\
2021 & 129,158 & 121,844 & 15.153 & 14,250 & Monthly workbooks & Complete Jan--Feb 2022 via company crosswalk; separate accepted pipeline \\
2022 & 85,148 & 79,471 & 41.576 & 3,743 & Harmonized & Complete Jan--Feb 2023; highly concentrated and reciprocal \\
2023 & 94,355 & 86,602 & 22.212 & 4,460 & Harmonized & Observed follow-up through 8 Feb 2024 \\
\bottomrule
\end{tabularx}
\end{table}

\subsection{Annual network characterization}\label{sec:topology}

The annual graphs are sparse directed multigraphs with highly concentrated edge weights and many source records per economically important relationship. Static topology varies substantially across years. Reciprocity is low in 2019--2020 but rises to 68.88\% in 2022. The largest strongly connected component contains 35.11\% of 2022 firms. The ratio of all local two-edge paths to those in the cyclic core, $R_{top}$, is 1.000157 in 2022: 99.984\% of local paths lie in cyclic structure. The graph contains 48,109 circuits through length eight, of which 4,325 have positive initial common-day capacity.

\begin{figure}[htbp]
\centering
\includegraphics[width=0.93\textwidth]{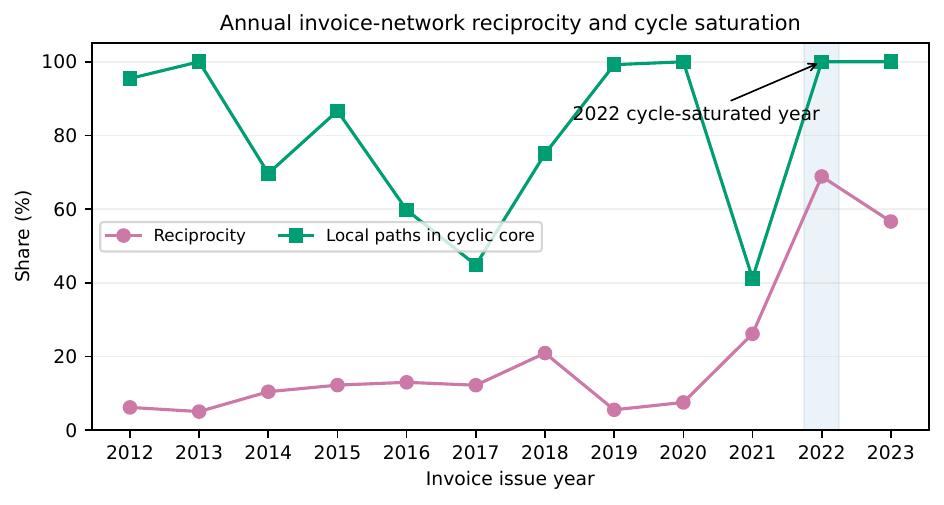}
\caption{Annual reciprocity and count-weighted local-path cycle saturation. The 2022 graph is the most cycle-saturated large cohort.}
\label{fig:annual-topology}
\end{figure}

This characterization is important because it prevents a simplistic interpretation. In 2022, path advantage cannot primarily arise from reaching an acyclic periphery: almost every local path is already in the cyclic core. The two move sets transform shared records differently, and their temporal schedules compete for bottleneck capacity.

\subsection{Validation and unit of inference}\label{sec:validation}

An independent replay program reloads every phase input and reconstructs each logged source fragment. It verifies source-record existence, operation-day activity, edge consistency, fragment sums, nonnegative residuals, instruction mass, PMR identities, issue-year attribution, and firm-level combined net positions. The 24 chronological phases for each policy---12 within-year phases and 12 terminal bridges---contain 320,579 cycle and 531,808 path fragment allocations. Every phase passes replay.

A second global audit spans phase boundaries. It records the original amount of all 751,192 introduced UIDs, accumulates physical consumption across the complete stream, and rejects either duplicate introduction or cumulative consumption above face value. All UIDs are introduced once; cycle and path consume from 268,748 and 397,470 distinct records, respectively; maximum overconsumption is zero. At each of the 12 boundaries, the aggregate identity corresponding to Eq.~\eqref{eq:nonreuse} holds exactly. The January--February bridge pool contains 116,570 records and \EUR16.584 billion: cycle clearing consumes \EUR6.628 billion and carries \EUR9.956 billion of residual mass, while path clearing consumes \EUR9.885 billion and carries \EUR6.699 billion. No bridge record is restored for the following annual phase.

Invoices are not treated as independent statistical observations. The primary repeated unit is the annual cohort, supplemented by connected-component and induced-subgraph analyses. Exact annual signs give ten path wins and two cycle wins; a one-sided sign test gives $p=0.0193$ under an exchangeable 50--50 null. This is descriptive because annual coverage and scale are heterogeneous. The 2013 difference is only $-0.0015$ percentage points and is treated as a practical tie under a 0.01-point tolerance. Component resampling is interpreted as stability and concentration analysis, not as an IID confidence interval.

\section{Results}\label{sec:results}

\subsection{Rolling causal annual comparison}\label{sec:main-results}

Table~\ref{tab:annual-main} reports issue-cohort PMR after each observed terminal bridge. Across all 12 cohorts, cycle netting reduces \EUR43.219 billion, or 43.347\% of issue-year mass. The path policy reduces \EUR48.060 billion, or 48.202\%. The aggregate difference is \EUR4.841 billion and 4.855 percentage points. Path clearing leads materially in ten cohorts, is practically tied in 2013, and trails in the small 2012 graph, which contains 339 records and seven bounded circuits.

\begin{table}[htbp]
\centering
\caption{Rolling causal daily PMR by issue cohort. January--February invoices enter once: any bridge consumption is permanent and only residual balance continues into the invoice's own issue year.}
\label{tab:annual-main}
\footnotesize
\renewcommand{\arraystretch}{1.14}
\setlength{\tabcolsep}{4.6pt}
\begin{tabular}{rrrrrr}
\toprule
Year & Mass (bn) & Bridge days & Cycle $L=8$ (\%) & Path-enabled (\%) & Difference (pp) \\
\midrule
2012 & 0.069 & 59 & 1.3407 & 0.9108 & $-0.4299$ \\
2013 & 0.119 & 59 & 0.0379 & 0.0364 & $-0.0015$ \\
2014 & 0.197 & 59 & 0.2255 & 0.7925 & $+0.5670$ \\
2015 & 0.538 & 60 & 0.2537 & 0.3549 & $+0.1012$ \\
2016 & 0.679 & 59 & 1.2021 & 1.3075 & $+0.1054$ \\
2017 & 0.257 & 59 & 3.8130 & 4.3427 & $+0.5297$ \\
2018 & 0.460 & 59 & 3.4529 & 5.0824 & $+1.6295$ \\
2019 & 4.424 & 60 & 14.1400 & 15.8582 & $+1.7182$ \\
2020 & 14.020 & 59 & 42.7905 & 47.7576 & $+4.9671$ \\
2021 & 15.153 & 59 & 36.7300 & 43.4099 & $+6.6800$ \\
2022 & 41.576 & 59 & 52.7948 & 56.4554 & $+3.6607$ \\
2023 & 22.212 & 39$^{a}$ & 40.7050 & 47.5624 & $+6.8575$ \\
\midrule
Mass-weighted & 99.705 & observed & 43.3468 & 48.2018 & $+4.8551$ \\
\bottomrule
\end{tabular}
\vspace{1mm}
\begin{minipage}{0.94\textwidth}\footnotesize
$^{a}$Available 1 January--8 February 2024. Cohorts 2012--2022 use complete January and February bridges. A difference with absolute value below 0.01 percentage points is treated as a practical tie.
\end{minipage}
\end{table}

At 31 December, before the terminal bridges, aggregate cycle and path PMR are 42.394\% and 47.331\%. The bridges add \EUR0.950 billion of cohort-attributed cycle PMR and \EUR0.869 billion of path PMR. The path lead consequently narrows from \EUR4.922 billion to \EUR4.841 billion, or from 4.936 to 4.855 percentage points. For the 2012--2022 subset, where every terminal bridge covers the complete first two months of the following year, cycle and path PMR are 44.104\% and 48.385\%, respectively, a 4.281-point path advantage. This coverage-matched subset is reported as a follow-up-horizon sensitivity.

\begin{figure}[htbp]
\centering
\includegraphics[width=0.93\textwidth]{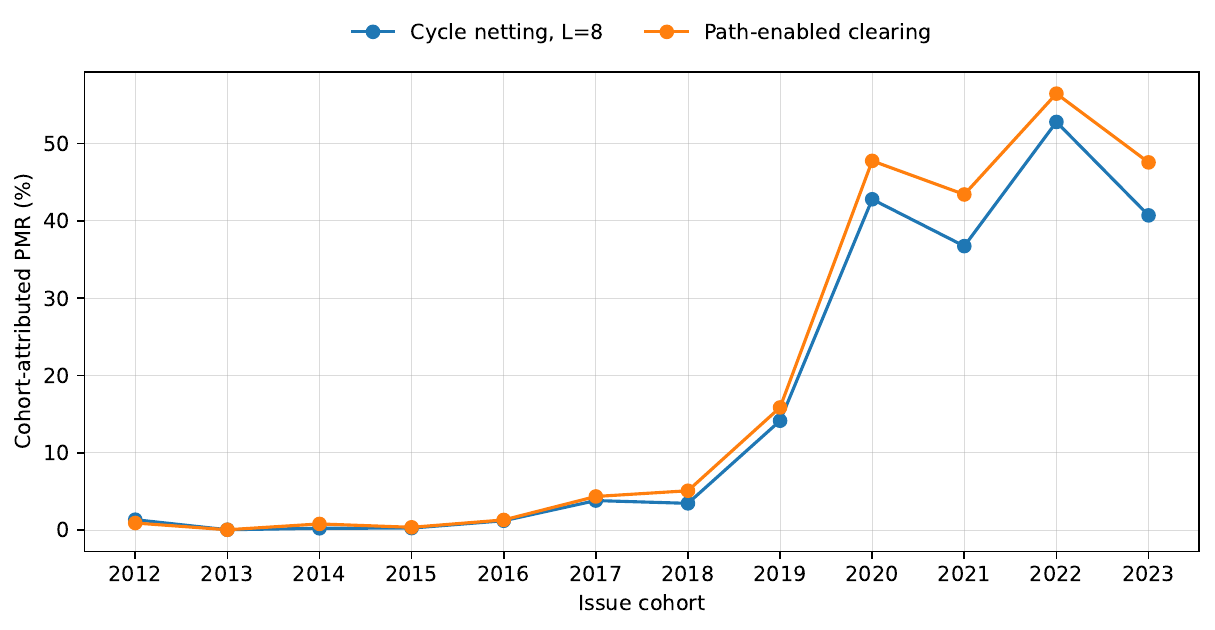}
\caption{Terminal issue-cohort PMR after each observed bridge.}
\label{fig:annual-pmr}
\end{figure}

\begin{figure}[htbp]
\centering
\includegraphics[width=0.93\textwidth]{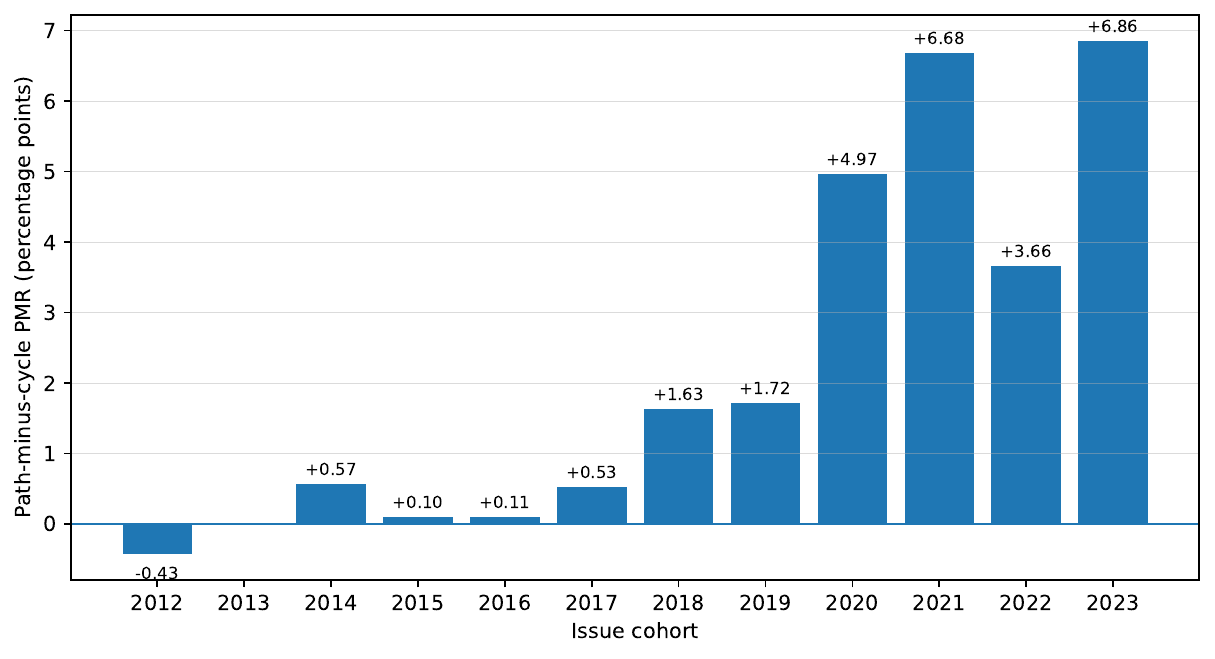}
\caption{Terminal path-minus-cycle PMR difference. The 2012 reversal and the near-zero 2013 difference are retained rather than absorbed into the aggregate.}
\label{fig:annual-advantage}
\end{figure}

\begin{figure}[htbp]
\centering
\includegraphics[width=0.90\textwidth]{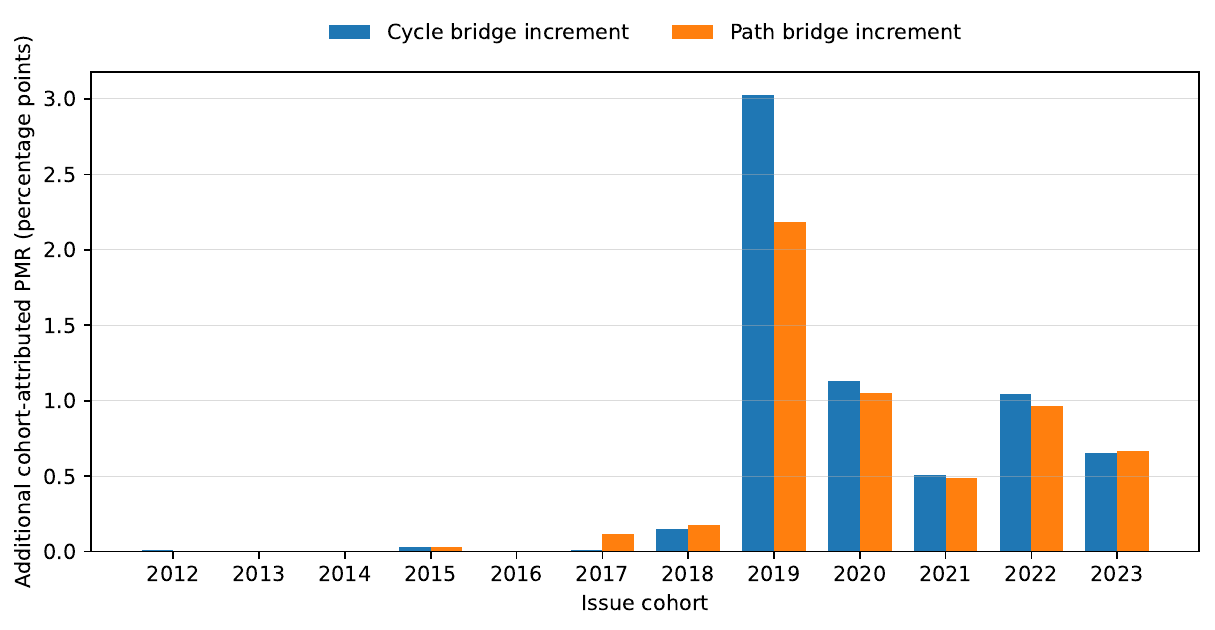}
\caption{Additional cohort-attributed PMR realized during each terminal bridge.}
\label{fig:bridge-increment}
\end{figure}

\subsection{Uninterrupted full-horizon robustness experiment}\label{sec:continuous-horizon}

The annual comparison is useful because it exposes repeated cohort-level heterogeneity, but its prescribed terminal bridges still impose accounting boundaries. To test whether those boundaries materially create the path advantage, the two CDG policies were rerun on one uninterrupted causal stream containing every analytical invoice issued from 1 January 2012 through 31 December 2023. There are no annual resets, cohort closures, or intermediate bridge cutoffs in this experiment: each source UID is introduced once on its issue date, any consumed amount is permanently removed, and every residual remains in the physical state until it is consumed or ceases to be temporally active. The observed 1 January--8 February 2024 records are introduced only after the 2012--2023 issue horizon as a terminal bridge. They may support clearing, but their \EUR0.630 billion of face value is excluded from the \EUR99.705 billion analytical denominator.

Table~\ref{tab:continuous-horizon} reports the result. At 31 December 2023, before the terminal bridge, cycle netting with $L\leq8$ has reduced 43.720\% of 2012--2023 issue mass and path-enabled clearing 48.457\%. The observed 2024 bridge contributes a further \EUR0.145 billion and \EUR0.148 billion, respectively, when only the PMR attributable to 2012--2023 source fragments is credited. Terminal PMR is therefore 43.865\% for cycle netting and 48.605\% for path clearing, a path advantage of \EUR4.727 billion or 4.741 percentage points.

\begin{table}[htbp]
\centering
\caption{Uninterrupted 2012--2023 causal stream with observed 2024 records used only as a terminal bridge. The denominator contains 749,952 invoices and \EUR99.705 billion issued in 2012--2023; 2024 bridge mass is excluded.}
\label{tab:continuous-horizon}
\footnotesize
\renewcommand{\arraystretch}{1.14}
\setlength{\tabcolsep}{4.5pt}
\begin{tabularx}{\textwidth}{>{\raggedright\arraybackslash}Xrrr}
\toprule
Measurement & Cycle $L\leq8$ & Path-enabled & Path advantage \\
\midrule
PMR at 31 Dec. 2023 (\%) & 43.7195 & 48.4571 & 4.7376 pp \\
PMR at 31 Dec. 2023 (EUR bn) & 43.591 & 48.314 & 4.724 \\
2024 bridge PMR attributed to 2012--2023 (EUR bn) & 0.145 & 0.148 & 0.003 \\
Terminal PMR (\%) & \textbf{43.8648} & \textbf{48.6054} & \textbf{4.7406 pp} \\
Terminal PMR (EUR bn) & \textbf{43.735} & \textbf{48.462} & \textbf{4.727} \\
\bottomrule
\end{tabularx}
\end{table}

For non-bilateral paths crossing the 2023--2024 boundary, the primary estimate assigns PMR symmetrically to the two consumed source fragments. Conservative and liberal attribution bounds give terminal path PMR of 48.596\% and 48.614\%, so the boundary-allocation convention is immaterial to the comparison. Relative to the rolling annual design, removing intermediate cohort boundaries raises both policies' aggregate PMR: cycle netting increases from 43.347\% to 43.865\%, and path clearing from 48.202\% to 48.605\%. The path advantage narrows only from 4.855 to 4.741 percentage points. Thus, the principal empirical difference is not an artifact of annual segmentation or of the two-month bridge construction; if anything, allowing residual obligations to persist across the entire horizon benefits the bounded-cycle policy slightly more.

\subsection{Representative cumulative PMR trajectories}\label{sec:curves}

Terminal values conceal when the difference emerges. Figures~\ref{fig:cumulative-early} and \ref{fig:cumulative-late} show unsmoothed cohort-attributed PMR. The vertical dashed line marks the start of the terminal bridge. In 2012, cycle netting overtakes the path policy in a small and sparse network. In 2020 and in the highly reciprocal 2022 network, both policies continue clearing prior-cohort obligations after 31 December while the path lead persists. PMR generated by mixed-year operations is partitioned among issue cohorts, and the underlying invoice fragments are consumed only once.

\begin{figure}[p]
\centering
\begin{subfigure}{0.92\textwidth}
\centering
\includegraphics[width=\linewidth]{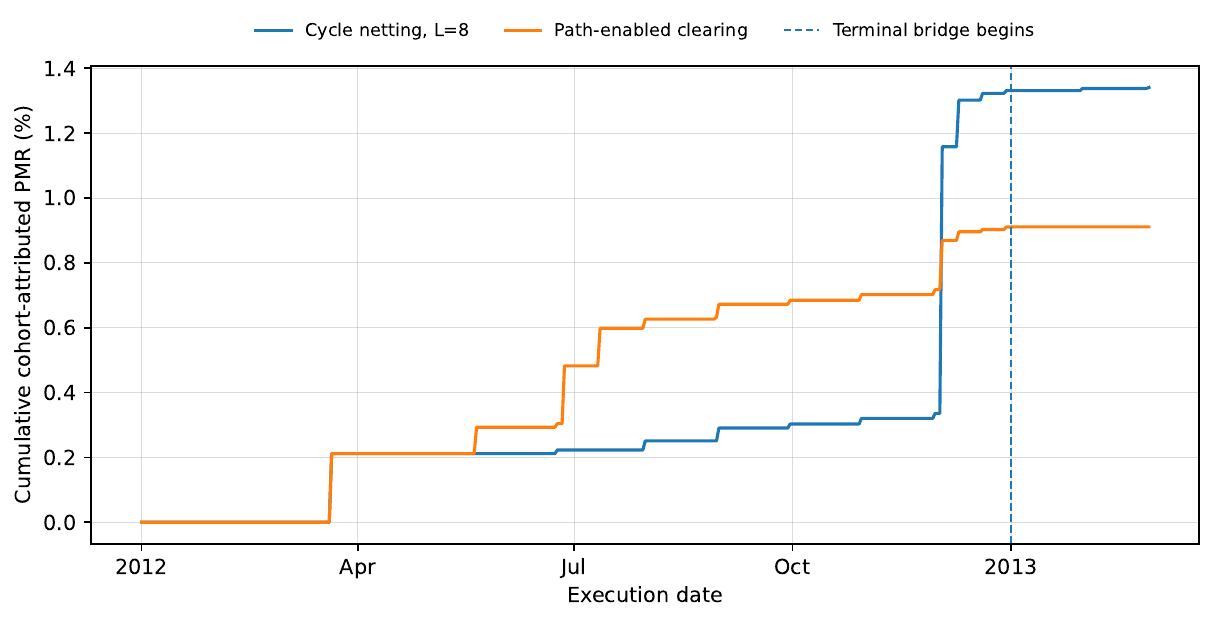}
\caption{2012: small-cohort reversal.}
\end{subfigure}
\par\vspace{2mm}
\begin{subfigure}{0.92\textwidth}
\centering
\includegraphics[width=\linewidth]{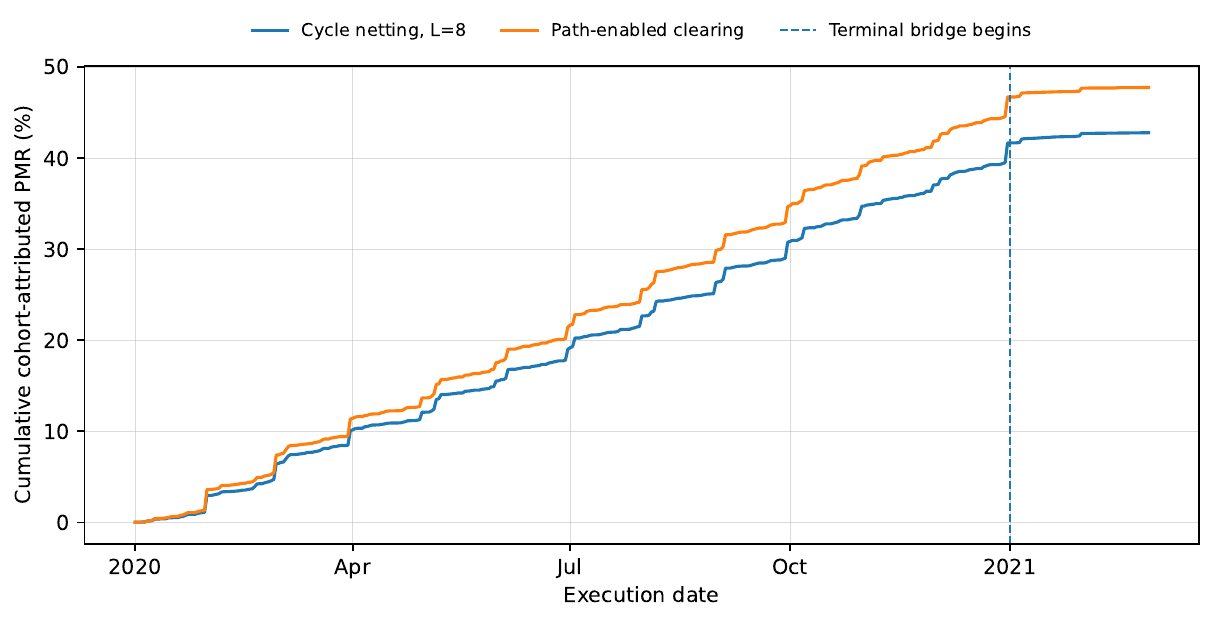}
\caption{2020: sustained path advantage in the largest record-count cohort.}
\end{subfigure}
\caption{Representative cumulative issue-cohort PMR trajectories for 2012 and 2020. Legends are placed above the plotting areas.}
\label{fig:cumulative-early}
\end{figure}

\begin{figure}[p]
\centering
\begin{subfigure}{0.92\textwidth}
\centering
\includegraphics[width=\linewidth]{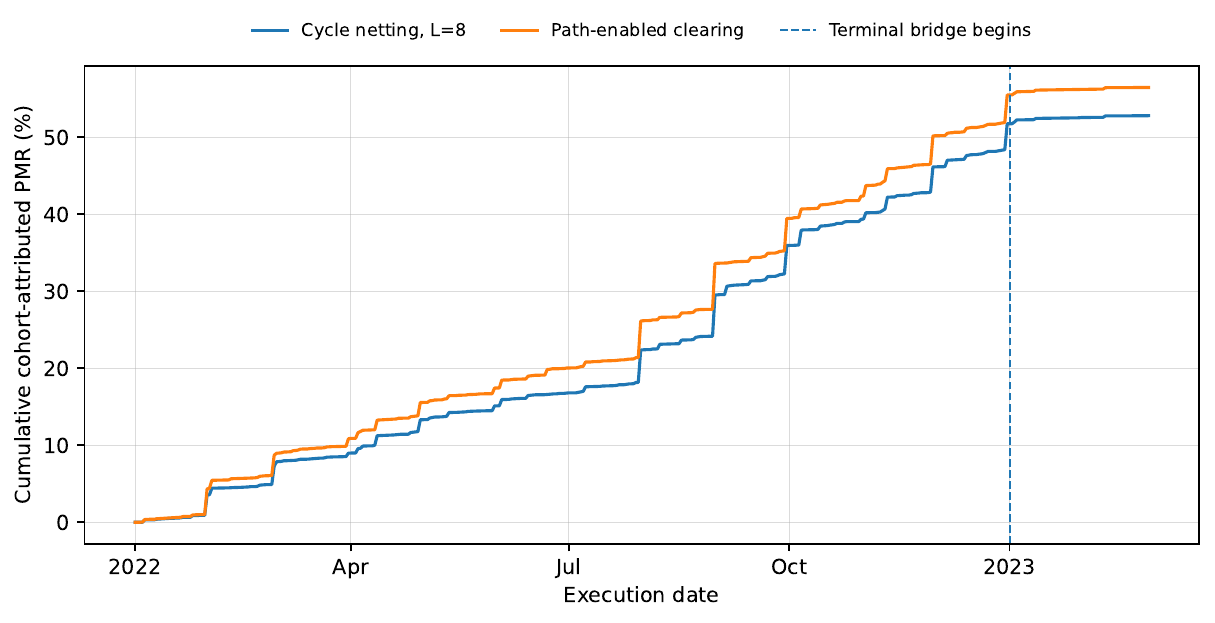}
\caption{2022: path advantage in the cycle-saturated network.}
\end{subfigure}
\par\vspace{2mm}
\begin{subfigure}{0.92\textwidth}
\centering
\includegraphics[width=\linewidth]{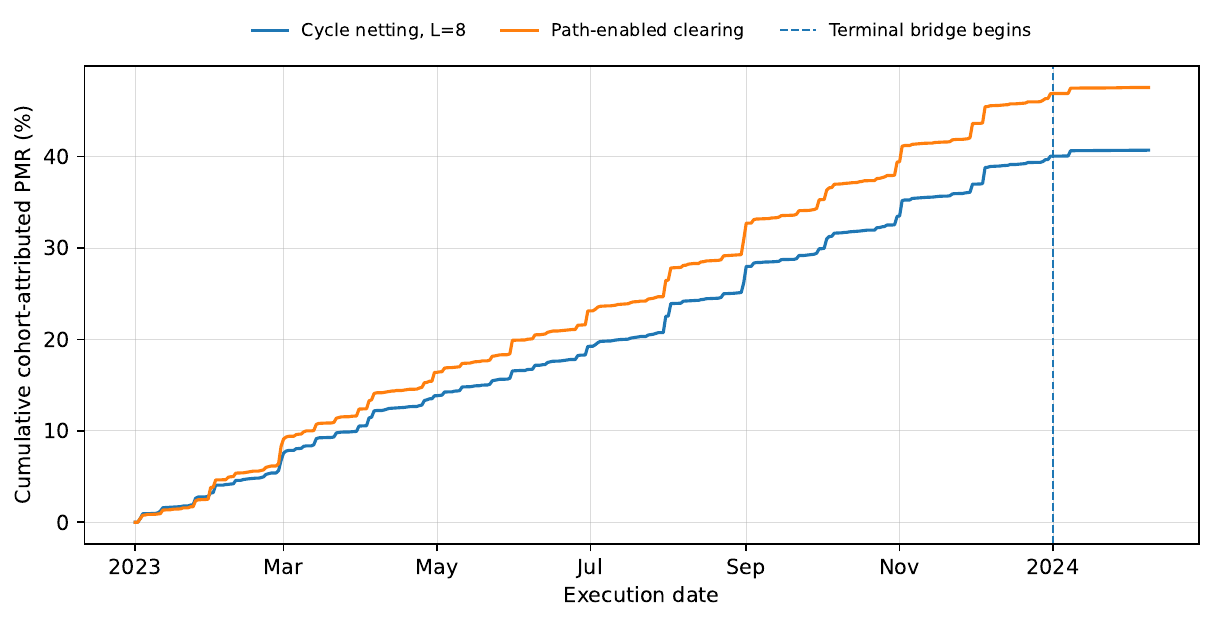}
\caption{2023: path advantage in the final issue cohort.}
\end{subfigure}
\caption{Representative cumulative issue-cohort PMR trajectories for 2022 and 2023.}
\label{fig:cumulative-late}
\end{figure}

\subsection{Full-information bounds and heuristic gaps}\label{sec:lp-results}

Continuous full-information LPs were solved on 156 small weakly connected components and 100 induced strongly connected core samples. In all 256 instances, the path LP bound weakly dominates the corresponding path CDG value, and the cycle LP bound weakly dominates cycle CDG, confirming Proposition~\ref{prop:dominance}. Across the 256 instances, the path-only LP bound exceeds the cycle-only bound in 149, the cycle bound exceeds the path bound in three, and they are equal in 104. On the 100 core samples, a mixed path--cycle LP improves on the better single-move-set bound in 29.

\begin{figure}[htbp]
\centering
\includegraphics[width=0.93\textwidth]{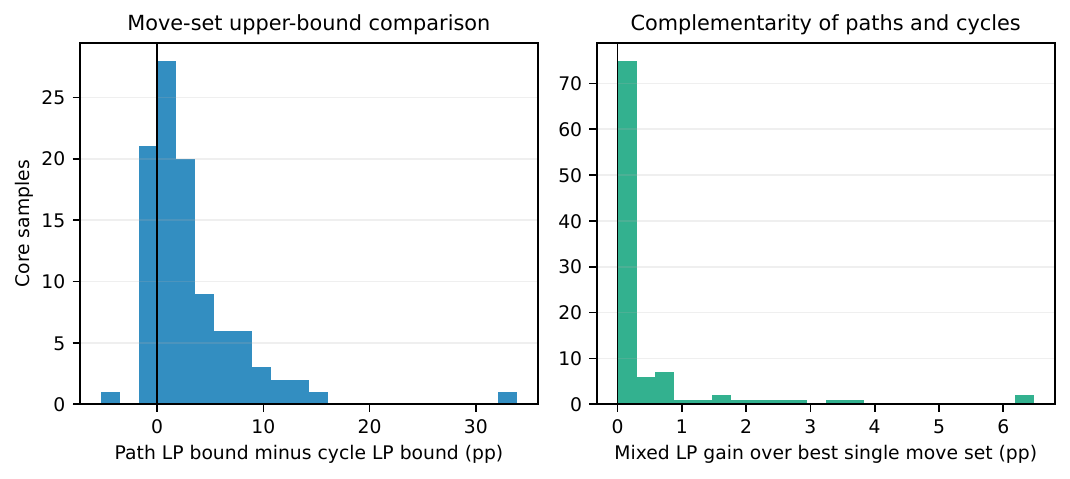}
\caption{Full-information continuous LP comparisons on 100 induced strongly connected core samples. The mixed formulation documents complementarity rather than universal dominance of either move set.}
\label{fig:lp-bounds}
\end{figure}

Heuristic gaps are material. On the 100 core samples, the median path LP gap above CDG is 1.31\% of sample invoice mass and the mean is 5.38\%; the corresponding cycle values are 0 and 0.90\%. CDG is therefore an operational benchmark, not an optimization bound. The LP results also show why the principal empirical claim should remain policy-level: path capacity is usually larger on these samples, but not universally, and a mixed move set can dominate either primitive alone.

\subsection{Comparator robustness: mixed policies, ordering, and cycle length}\label{sec:robustness-comparator}

The supplementary comparator experiments in this subsection stop at 31 December and are interpreted as issue-year scheduling diagnostics rather than bridge-adjusted headline estimates. A sequential cycle-then-path CDG policy modestly improves the path-only result in the evaluated years. In 2019 it reaches 13.894\%, compared with path-only 13.665\%. In 2022 it reaches 55.878\%, compared with path-only 55.514\%. This improvement and the LP evidence support hybrid proposal engines: the move sets share capacity but are not strict substitutes.

Randomized near-greedy schedules show both stability and order sensitivity. Across 500 runs for each year-regime combination, path-minus-cycle PMR is positive in all 2019 samples, 70\% of 2020 samples, 80--90\% of 2022 samples depending on regime, and 60--70\% of 2023 samples. These are induced strongly connected samples rather than annual replications; the dispersion therefore qualifies the deterministic comparison rather than overturning it.

Cycle-length sensitivity is small by $L=8$ in 2022 and 2023. In 2022, CDG cycle PMR rises from 51.614\% at $L=4$ to 51.786\% at $L=8$ and 51.789\% at $L=10$. In 2023 calendar-only execution, the corresponding values are 38.316\%, 38.381\%, and 38.381\%.

\begin{figure}[htbp]
\centering
\includegraphics[width=0.88\textwidth]{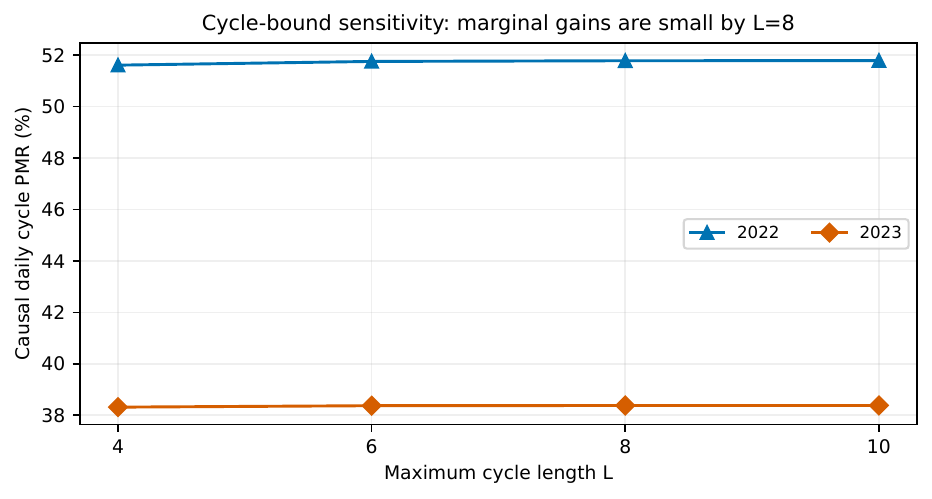}
\caption{Cycle-length sensitivity in 2022 and 2023. Candidate counts grow sharply, whereas incremental CDG PMR beyond length eight is negligible in these cohorts.}
\label{fig:cycle-length}
\end{figure}

\subsection{Heterogeneity, SCC location, and the 2022 case}\label{sec:heterogeneity}

The path result is not explained solely by open chains. Within the issue-year phase, 86.96\% of 2019 path PMR, 95.20\% of 2020, 99.41\% of 2022, and 96.30\% of 2023 occurs inside strongly connected components. Across the corresponding bridge-adjusted outcomes, reciprocal cancellation accounts for 66.66\%, 69.82\%, 80.73\%, and 74.49\%, respectively. Reciprocal paths are economically important because they yield $2q$ PMR without an instruction, but non-bilateral paths still contribute material PMR and are the distinctive redirection mechanism.

The cycle-saturated 2022 graph is therefore informative. Almost all local paths lie in the cyclic core, yet after its complete bridge the path policy still exceeds the cycle policy by 3.661 percentage points under CDG. The advantage arises from operation granularity, reciprocal two-edge opportunities, different use of bottleneck records, and temporal sequencing, not merely access to an acyclic periphery.

Exploratory regressions on 100 core samples do not identify a single strong univariate predictor. Spearman correlations between path advantage and candidate-value ratio, reciprocity share, cycle closure, temporal persistence, edge-weight concentration, candidate-conflict concentration, or records per edge are small and statistically weak. A heteroskedasticity-robust multivariate model with year controls has $R^2=0.196$. These results argue against a one-number topology explanation and motivate richer conflict-graph and temporal-capacity models.

Component resampling further shows concentration. The largest weakly connected component contains 49.7\% of 2019 mass and more than 95\% in 2022--2023. Removing it nearly eliminates or reverses the annual path advantage. Bootstrap intervals across components consequently cross zero. This is not evidence that the annual result is spurious; it shows that effective inference is governed by dominant connected systems rather than by hundreds of thousands of nominally separate invoices.

\begin{figure}[htbp]
\centering
\includegraphics[width=0.84\textwidth]{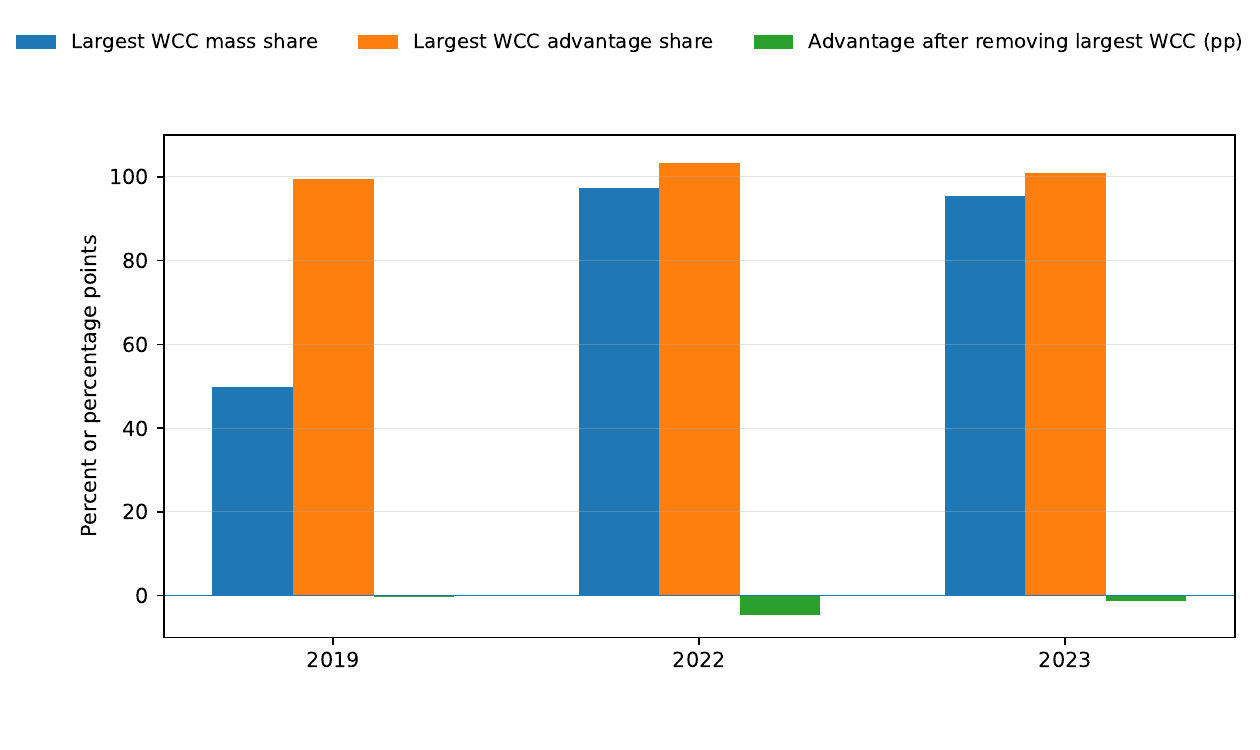}
\caption{Concentration of mass and path advantage in the largest weakly connected component. Component bootstrap intervals are interpreted as stability diagnostics, not IID confidence intervals.}
\label{fig:component-concentration}
\end{figure}

\subsection{Payer acceleration and constrained feasibility}\label{sec:acceleration-results}

Path clearing can accelerate the first payer because the instruction inherits the earlier of the two source due dates. In standalone issue-year CDG diagnostics used to construct the acceleration frontier, amount-weighted mean acceleration on non-bilateral instruction mass is 15.25 days in 2020, 12.24 days in 2022, and 7.90 days in 2023. The positive-only means are 34.60, 17.25, and 24.13 days. Acceleration is concentrated: the top ten payers account for approximately 40.1\%, 82.2\%, and 57.3\% of accelerated instruction mass in those years.

Figure~\ref{fig:acceleration-frontier} gives a post-hoc feasible retention curve: it removes operations from the unconstrained log when their paired fragments exceed a stated acceleration cap, without reoptimizing the remaining schedule. It is therefore a conservative operational diagnostic, not the optimal PMR under each cap. A zero-day cap retains 86.7\% of unconstrained daily path PMR in 2020, 86.2\% in 2022, and 91.7\% in 2023. At 10 days, retention rises to 90.6\%, 93.3\%, and 94.7\%.

\begin{figure}[htbp]
\centering
\includegraphics[width=0.86\textwidth]{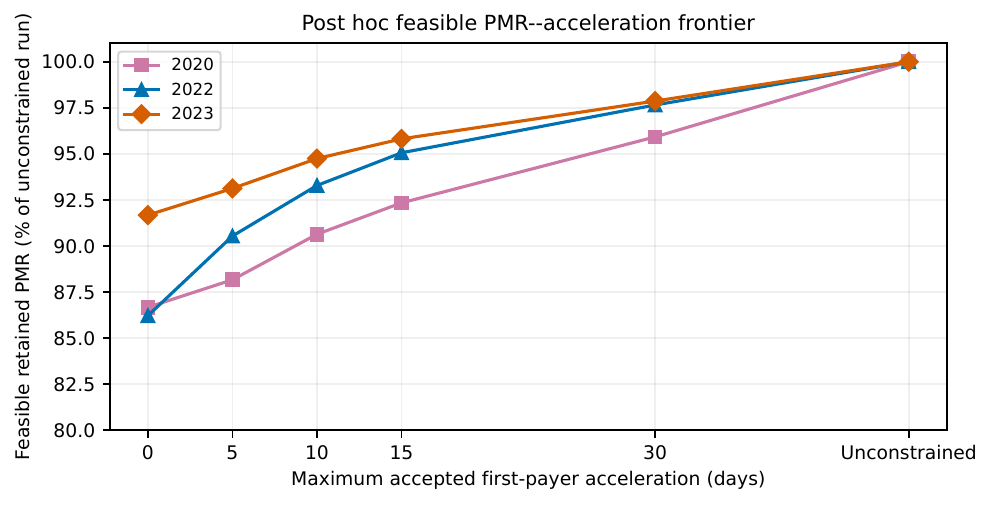}
\caption{Post-hoc feasible PMR retained after applying first-payer acceleration caps to the unconstrained path logs. The curves do not reoptimize the schedule.}
\label{fig:acceleration-frontier}
\end{figure}

These results make acceleration a central governance constraint rather than a secondary note. A deployable objective should maximize expected accepted PMR net of acceleration compensation, exposure, communication, and failure costs. The main unconstrained results quantify algorithmic capacity, not adoption-adjusted welfare.

\subsection{Workload, validation, and provenance sensitivities}\label{sec:workload}

Across the rolling sequence, the cycle policy performs 67,425 operations and the path policy 158,262. The \EUR4.841 billion aggregate difference corresponds descriptively to approximately \EUR53,290 per additional path operation. This is not a transaction-cost threshold because operations differ in participant count, record fragments, acceptance probability, exposure, and legal work.

Validated-only and exact-duplicate-filtered sensitivities preserve the direction in the principal large experiments. Conservative and liberal mixed-year attribution place aggregate path PMR between 47.844\% and 48.560\%, around the symmetric estimate of 48.202\%; both remain above the 43.347\% cycle result. Phase replay returns zero temporal violations, bad fragment sums, overconsumption, negative residuals, PMR identity gaps, and firm-level net-position errors. The cross-phase UID audit additionally verifies that no bridge invoice is introduced or consumed twice.

\section{Interpretation and implications}\label{sec:discussion}

\subsection{What the evidence establishes}\label{sec:establishes}

The evidence supports three conclusions. First, the path move set is operationally meaningful under source-record-exact temporal discipline: its advantage is not created by annual aggregation or by counting both removed invoice legs as economic relief. Second, under the same causal daily information and the same next-year bridge design, the evaluated path policy is empirically stronger than complete-candidate cycle netting with $L\leq8$ in most annual cohorts. The uninterrupted 2012--2023 robustness experiment strengthens that interpretation: after removing intermediate cohort closures entirely and using 2024 records only as terminal bridge support, path PMR remains 48.605\% versus 43.865\% for cycles. PMR-aligned scoring, non-reusable source records, and the continuous-horizon result therefore rule out reciprocal-path score asymmetry, a hard 31 December cutoff, bridge-balance restoration, or annual segmentation as explanations for the aggregate advantage. Third, neither result establishes global dominance. The LP samples contain a small number of cases in which the cycle-only bound is higher, and mixed bounds or mixed heuristics can improve on either move set alone.

CDG is deliberately the sole annual execution regime because it answers the operational question cleanly: what can each move set realize when both see exactly the same invoices as they arrive? The full-information LP is a separate theoretical diagnostic, not a competing implementation. This hierarchy avoids conflating causal deployability, greedy ordering, and optimization.

The 2022 case extends the interpretation beyond open-chain reach. Path clearing remains valuable in a network whose local motifs are almost entirely cyclic. Two-edge operations can still use common-day capacity differently, exploit reciprocal cancellation, avoid long-coalition bottlenecks, and leave a different residual state. Static cycle saturation is therefore not a sufficient predictor of move-set performance.

\subsection{Economic and legal applicability}\label{sec:legal}

PMR is a reduction in gross post-instruction settlement burden. It is not new liquidity, profit, welfare, or solvency improvement. Financing remains necessary for unmatched net obligations. A combined platform could apply consented clearing first and finance residual claims second, but the benefits would depend on fees, funding costs, acceptance, and risk.

The empirical accounting is legally meaningful only under a specified discharge mode. Under consented discharge, redirected payment extinguishes the matched source portions and PMR is realized after performance. Under instruction-only mode, PMR is proposed until payment finality. Under novation, the original claims may be replaced earlier, but counterparty credit exposure changes. Production deployment must define assignment or novation rules, failed-payment reversal, tax treatment, insolvency priority, sanctions screening, KYC responsibilities, dispute rights, and audit retention.

The method is potentially applicable wherever organizations hold overlapping dated payables and receivables under a common legal or contractual framework: supplier networks, public-sector arrears, telecommunications settlements, platform ecosystems, intercompany treasury, and cooperative purchasing networks. Applicability requires reliable identifiers, outstanding balances, legal eligibility, and participant authorization. The current data use invoice face value because a consistently interpretable outstanding-balance field is unavailable; this limits claims about immediately executable cash obligations.

\subsection{Path-enabled clearing as an agentic research agenda}\label{sec:decentralized-agenda}

The empirical policy and the future implementation architecture must remain conceptually separate. CDG is a synchronized centralized benchmark: on each day it observes the active platform graph, ranks candidates, executes a fixed point, and carries residual records forward. The path operation is the economic primitive beneath that scheduler. A decentralized system need not reconstruct the global CDG ranking. It can preserve the same atomic records, common-day predicate, accounting identities, and replay evidence while replacing daily global search with asynchronous local discovery, reservation, consent, and commit.

The path primitive has a structural reason to be attractive in that setting. An intermediary $B$ can identify a proposal $A\rightarrow B\rightarrow C$ from two incident relationships. The evidence required to verify the amount is confined to source fragments on those two edges, and the authorization coalition contains at most three firms. This scope does not grow with the surrounding network. A length-$k$ circuit, by contrast, requires closure discovery, evidence for $k$ relationships, and potentially simultaneous agreement by $k$ firms. Under partial participation and short-lived capacity, every additional participant can create another rejection, timeout, stale reservation, or disclosure boundary. The claim established here is therefore one of bounded locality and coalition size. Whether it produces lower end-to-end communication cost or higher realized PMR is an empirical question for the next study.

The modern agentic stack suggests how that study can be organized. The autonomous-supply-chain methodology of Xu and co-authors provides the domain pattern: firms are represented by specialized agents that act under local objectives and organizational controls rather than by one unconstrained general model \cite{xu2024implementation}. The present paper adds a financial state transition to that pattern. Each agent can expose only eligible invoice commitments and private acceptance rules, while the system's economic outcome remains defined by source-record conservation.

A2A and MCP occupy complementary technical layers. A2A can advertise a firm's clearing capability, carry a path proposal as a long-running task, and communicate status changes without requiring counterparties to reveal their internal models or tools \cite{a2a2025}. MCP can give a local agent controlled access to an invoice vault, credit policy, sanction screen, or deterministic capacity calculator \cite{mcp2025}. Neither protocol should decide monetary capacity. Their role is to transport requests and expose authorized functions; the common-day verifier and residual ledger remain deterministic.

AP2 offers a useful precedent for commitment. Its mandates and receipts show how delegated intent, authorization scope, and dispute evidence can be represented separately from conversational reasoning \cite{ap22026}. An invoice-clearing implementation would require different legal objects---for example, source-fragment commitments, acceleration mandates, discharge conditions, and an instruction receipt---but the same architectural principle applies: the agent proposes, while signed credentials and verifiers authorize and record the economic action. AgenticPay supplies a complementary evaluation lesson. By modelling negotiation under private constraints and measuring feasibility, efficiency, welfare, timeouts, and violations, it demonstrates why an agentic clearing experiment must evaluate more than agreement rate or linguistic quality \cite{agenticpay2026}.

Table~\ref{tab:agentic-stack} maps these modern components to the research contribution enabled by the present paper. The contribution is not another general agent framework. It is a deterministic, auditable economic kernel that an agent ecosystem currently lacks: a canonical invoice state, an exact common-day proposal amount, a net-position-preserving transition, issue-cohort attribution, and replay tests that can reject invalid or duplicated commitments.

\begin{table}[htbp]
\centering
\caption{Modern agentic components and the invoice-clearing research agenda enabled by this paper.}
\label{tab:agentic-stack}
\footnotesize
\renewcommand{\arraystretch}{1.13}
\setlength{\tabcolsep}{3.5pt}
\begin{tabularx}{\textwidth}{>{\raggedright\arraybackslash\bfseries}p{0.18\textwidth}>{\raggedright\arraybackslash}p{0.25\textwidth}>{\raggedright\arraybackslash}X>{\raggedright\arraybackslash}p{0.22\textwidth}}
\toprule
Layer & Modern reference & Contribution from the present model & Open research question \\
\midrule
Domain agents & Autonomous supply-chain MAS \cite{xu2024implementation} & Firm-local invoice state, policy constraints, and auditable financial actions & Participation, incentives, organizational control, and cross-firm governance \\
Agent messaging & A2A \cite{a2a2025} & A typed path-proposal and status lifecycle with bounded three-firm scope & Discovery quality, message volume, latency, and stale tasks \\
Tool and data access & MCP \cite{mcp2025} & Controlled access to invoice vaults, policy tools, and deterministic verifiers & Least privilege, prompt injection, credential isolation, and tool trust \\
Authorization and evidence & AP2 \cite{ap22026} & Clearing-specific mandates, fragment reservations, discharge receipts, and reversal evidence & Legal enforceability, finality, replay resistance, and liability \\
Negotiation and evaluation & AgenticPay \cite{agenticpay2026} & Private acceleration, exposure, fee, and compensation constraints with PMR-based metrics & Strategic behavior, fairness, adoption, and welfare under language-mediated bargaining \\
\bottomrule
\end{tabularx}
\end{table}

A candidate event-driven protocol would proceed through six auditable stages. First, a firm advertises clearing capabilities and admissible policy ranges without exposing its invoice book. Second, an intermediary or federated broker discovers a two-edge opportunity. Third, each edge owner returns signed source-fragment commitments and a short-lived reservation. Fourth, agents negotiate only the discretionary terms---acceleration, compensation, exposure, and discharge mode---while the amount remains bounded by deterministic common-day capacity. Fifth, a verifier checks signatures, record activity, residual sufficiency, sanctions and policy constraints, then performs an atomic commit that either consumes both fragments and creates the instruction or changes nothing. Sixth, all parties receive a receipt suitable for replay, reconciliation, timeout, or reversal.

This architecture gives language models a deliberately narrow role. They may interpret human policies, explain alternatives, search for counterparties, or negotiate compensation. They should not calculate the authoritative amount, mutate residual balances, waive a legal constraint, or certify conservation. The separation addresses a central weakness identified by the modern multi-agent survey literature: coordination quality cannot substitute for grounded state and verifiable actions \cite{guo2024multiagents}.

The decisive future experiment should run four mechanisms on the same invoice-event stream: centralized path CDG, asynchronous path agents, asynchronous cycle agents, and a hybrid market able to propose either move. Treatments should vary participation, message delay, reservation lifetime, rejection, private acceleration caps, counterparty exposure, strategic withholding, identity and privacy overhead, and failure recovery. Outcomes should include accepted PMR, PMR-days, messages per accepted euro, disclosed relationships, stale reservations, time to finality, acceleration distribution, concentration, fairness, and distance from the tractable full-information bound.

The paper supplies both references needed to interpret that experiment. CDG is the causal centralized baseline: it measures what the move set can achieve before communication and consent frictions. The full-information LP is the oracle reference on tractable instances: it measures remaining allocation loss. The gap between agent execution and CDG would estimate coordination and participation losses; the gap between CDG and the LP would measure centralized heuristic and foresight losses. This decomposition is a direct contribution of the present empirical design to the broader agentic-AI agenda.

The principal hypothesis is that bounded three-firm paths will retain a larger share of centralized PMR than longer cycles as participation falls or consent becomes less reliable. Secondary hypotheses concern fewer stale reservations, smaller disclosure scope, and faster finality. These advantages are not guaranteed: repeated local search, strategic withholding, adverse selection, or mandate overhead may outweigh coalition-size benefits. A rigorous test of that trade-off---rather than a prototype demonstration alone---is the appropriate next research contribution.

\section{Limitations and future research}\label{sec:limitations}

The study has several limitations. The annual cohorts are heterogeneous in source coverage, scale, provenance, and terminal follow-up. The mass-weighted result is not a population estimate or national time trend. The bridge design reduces rather than eliminates right-censoring: unresolved prior-cohort records are closed after their observed follow-up even when contractual due dates extend further. The sequence also begins in 2012 without a pre-2012 opening state. The 2021 issue cohort uses a separate accepted workbook pipeline, with documented crosswalks at its boundaries. Raw identifiable invoices cannot be openly released, which limits independent empirical replication even though software, derived tables, and replay procedures can be shared.

The large policies remain greedy. The LP analysis is limited to small and induced subgraphs and uses continuous divisibility. The uninterrupted full-horizon experiment removes intermediate annual closures and shows that the main path advantage persists, but the sequence still begins in 2012 without a pre-2012 opening state and ends with the observed 2024 follow-up. Future work should develop scalable temporal path--cycle column generation, exact mixed-integer models on larger instances, approximation bounds, and scheduling policies that explicitly reserve capacity for future arrivals. Longer continuous panels and alternative terminal follow-up horizons would further quantify residual start- and end-boundary sensitivity.

The path instruction layer is not recycled. Recursive reuse could increase compression but would introduce additional exposure, legal chaining, and termination questions. It should be analyzed as a distinct model. Likewise, the current mixed heuristic is sequential; integrated candidate ranking and joint optimization may perform better.

Acceleration sensitivities are post-hoc rather than reoptimized frontiers. Future policies should enforce payer-level caps during candidate selection, include compensation and utility, and report adoption-adjusted PMR. Risk limits, counterparty whitelists, sanctions, currency, partial payment, dispute, and legal eligibility are not modeled.

Statistical inference is constrained by dependence and network concentration. Annual sign tests have only 12 cohorts; component bootstraps are dominated by giant connected components; induced samples do not recreate the full graph. Future data partnerships should provide additional jurisdictions, sectors, years, and stable identifiers, enabling hierarchical analysis across genuinely independent networks.

Finally, decentralized execution is not evaluated here. The fixed three-firm scope of a path proposal motivates the event-driven research design in Section~\ref{sec:decentralized-agenda}, but it does not prove lower end-to-end communication cost, higher acceptance, stronger privacy, resilience, incentive compatibility, or welfare. Those properties must be measured under partial participation, asynchronous negotiation, strategic behavior, privacy proofs, failure recovery, and legally valid authorization.

\section{Conclusion}\label{sec:conclusion}

This paper develops an atomic common-day invoice-clearing model and evaluates path and bounded-cycle policies across temporal invoice networks under one causal daily schedule. Fixed-candidate capacity is exact: $\delta_F^*(s)=\max_t\min_{e\in F}c_e(t;s)$. Every operation is assigned to a common day and consumed from active source records. Path accounting preserves net positions on the combined invoice-plus-instruction state and avoids conflating invoice compression with economic PMR.

The empirical design follows invoice state continuously across annual boundaries. Each source UID enters once, cumulative consumption cannot exceed face value, and only residual bridge balance proceeds into the invoice's own issue year. Mixed-year PMR is partitioned from the consumed source fragments, so physical state and cohort accounting remain additive. Independent replay verifies temporal activity, fragment sums, instruction mass, PMR identities, nonnegative residuals, and firm-level net positions.

Across 749,952 invoices and \EUR99.705 billion issued from 2012 through 2023, path-enabled clearing reaches 48.202\% PMR and complete-candidate cycle netting with $L\leq8$ reaches 43.347\%. The difference is \EUR4.841 billion and 4.855 percentage points. Path clearing leads materially in ten cohorts, is practically tied in 2013, and trails in 2012. In cycle-saturated 2022, where almost every local two-edge path lies inside the cyclic core, path PMR remains higher. When all 2012--2023 invoices are instead processed as one uninterrupted causal stream and the observed 2024 records are used only as a terminal bridge, path PMR is 48.605\% versus 43.865\% for cycles, a \EUR4.727 billion or 4.741-point advantage. The result is therefore not explained solely by access to an acyclic periphery or by the annual cohort boundaries used in the principal repeated comparison.

The evidence remains policy-specific. Ordering, reciprocal cancellation, acceleration, network concentration, legal discharge, workload, and source coverage affect outcomes. Tractable full-information programs show both path advantage and path--cycle complementarity. The operational implication is that clearing engines should admit local path transformations alongside cycles and select among them under temporal, consent, exposure, and acceleration constraints; the study does not establish universal move-set or welfare dominance.

The path primitive also opens a focused agentic-AI research agenda. Its fixed two-edge, three-firm scope can be combined with modern agent messaging, controlled tool access, cryptographic mandates, and private-constraint negotiation while retaining deterministic common-day verification and replay. CDG supplies the causal centralized baseline for that future experiment, and the full-information programs supply tractable oracle references. Measuring accepted PMR, communication, disclosure, latency, failure, fairness, and incentive effects would establish whether local path clearing provides a practical decentralized advantage rather than merely an architectural one.

\section*{Data and reproducibility materials}
A pseudonymized version of the invoice records and the deposited reproducibility outputs supporting the experiments are publicly available in Mendeley Data: Peplluis Esteva de la Rosa (2026), \emph{Atomic Common-Day Invoice Clearing: Pseudonymized Invoice Records and Reproducibility Data, 2012--2023}, V1, \doi{10.17632/28rbmvwsm9.1}. Direct company names, VAT/CIF identifiers, invoice numbers, source-file identifiers, and reversible mappings are not included in the public deposit. Identifiable commercial source records are not redistributed.

\appendix
\clearpage

\section{Complete annual source and topology table}\label{app:annual}
\begin{table}[htbp]
\centering
\caption{Annual analytical sequence and selected topology diagnostics.}
\label{tab:annual-topology-full}
\scriptsize
\resizebox{\textwidth}{!}{%
\begin{tabular}{rrrrrrrrr}
\toprule
Year & Records & Mass bn & Firms & Edges & Reciprocity & Largest SCC & $R_{top}$ & $L\leq8$ circuits \\
\midrule
2012 & 339 & 0.069 & 138 & 129 & 0.0620 & 0.0290 & 1.0476 & 7 \\
2013 & 540 & 0.119 & 184 & 158 & 0.0506 & 0.0109 & 1.0000 & 4 \\
2014 & 666 & 0.197 & 196 & 153 & 0.1046 & 0.0153 & 1.4348 & 9 \\
2015 & 1,069 & 0.538 & 256 & 196 & 0.1224 & 0.0078 & 1.1538 & 12 \\
2016 & 2,078 & 0.679 & 400 & 323 & 0.1300 & 0.0075 & 1.6727 & 21 \\
2017 & 4,004 & 0.257 & 601 & 491 & 0.1222 & 0.0067 & 2.2300 & 31 \\
2018 & 7,798 & 0.460 & 1,160 & 1,070 & 0.2093 & 0.0103 & 1.3328 & 125 \\
2019 & 70,155 & 4.424 & 25,383 & 26,358 & 0.0555 & 0.0122 & 1.0080 & 12,479 \\
2020 & 375,386 & 14.020 & 63,371 & 67,600 & 0.0755 & 0.0293 & 1.0001 & 899,363 \\
2021 & 121,844 & 15.153 & 14,250 & 17,116 & 0.2617 & 0.1019 & 2.4252 & 186,945 \\
2022 & 79,471 & 41.576 & 3,743 & 5,883 & 0.6888 & 0.3511 & 1.0002 & 48,109 \\
2023 & 86,602 & 22.212 & 4,460 & 6,409 & 0.5664 & 0.2785 & 1.0000 & 21,674 \\
\bottomrule
\end{tabular}}
\end{table}
\FloatBarrier
\clearpage

\section{Full annual provenance audit}\label{app:provenance}

\begin{table}[htbp]
\centering
\caption{Annual source selection and retained analytical population. Exact fingerprint duplicates are audited separately and retained in the main corpus unless they belong to a known source overlap.}
\label{tab:provenance-full}
\footnotesize
\renewcommand{\arraystretch}{1.12}
\setlength{\tabcolsep}{5pt}
\begin{tabularx}{\textwidth}{rrrr>{\raggedright\arraybackslash}X}
\toprule
Year & Input & Retained & Mass (bn) & Principal coverage or provenance limitation \\
\midrule
2012 & 366 & 339 & 0.069 & Sparse legacy coverage \\
2013 & 587 & 540 & 0.119 & Sparse legacy coverage \\
2014 & 848 & 666 & 0.197 & Sparse legacy coverage \\
2015 & 1,234 & 1,069 & 0.538 & Sparse legacy coverage \\
2016 & 2,376 & 2,078 & 0.679 & Sparse legacy coverage \\
2017 & 4,420 & 4,004 & 0.257 & Sparse legacy coverage \\
2018 & 8,289 & 7,798 & 0.460 & Sparse legacy coverage \\
2019 & 73,108 & 70,155 & 4.424 & Beginning of major coverage expansion \\
2020 & 387,674 & 375,386 & 14.020 & Expanded reporting population \\
2021 & 129,158 & 121,844 & 15.153 & Separate accepted monthly-workbook pipeline \\
2022 & 85,148 & 79,471 & 41.576 & Highly concentrated and reciprocal network \\
2023 & 94,355 & 86,602 & 22.212 & Shorter terminal follow-up in the supplied source \\
\bottomrule
\end{tabularx}
\end{table}

\begin{table}[htbp]
\centering
\caption{Annual cleaning exclusions by reason. Counts are mutually exclusive under the implemented cleaning order.}
\label{tab:cleaning-full}
\scriptsize
\renewcommand{\arraystretch}{1.11}
\setlength{\tabcolsep}{4.3pt}
\begin{tabular}{rrrrrrr}
\toprule
Year & Bad amount & Bad due date & Negative maturity & Cancelled & Missing/self party & Known overlap \\
\midrule
2012 & 12 & 15 & 0 & 0 & 0 & 0 \\
2013 & 27 & 19 & 1 & 0 & 0 & 0 \\
2014 & 32 & 71 & 1 & 78 & 0 & 0 \\
2015 & 45 & 67 & 2 & 51 & 0 & 0 \\
2016 & 72 & 127 & 12 & 87 & 0 & 0 \\
2017 & 145 & 151 & 107 & 13 & 0 & 0 \\
2018 & 317 & 118 & 40 & 16 & 0 & 0 \\
2019 & 2,245 & 346 & 219 & 143 & 0 & 0 \\
2020 & 5,552 & 3,651 & 2,348 & 732 & 5 & 0 \\
2021 & 1,867 & 4,735 & 707 & 0 & 3 & 2 \\
2022 & 1,119 & 3,781 & 502 & 150 & 0 & 125 \\
2023 & 857 & 6,065 & 831 & 0 & 0 & 0 \\
\bottomrule
\end{tabular}
\end{table}
\FloatBarrier
\clearpage

\section{Sequential annual execution and non-reuse controls}\label{app:bridges}

For method $m$, the annual sequence is one physical state stream. January--February records of year $y+1$ are introduced during cohort $y$'s terminal bridge. Their consumed amount is removed immediately and their residual amount becomes the opening state for the remainder of year $y+1$. Older cohort residuals are closed after their prescribed bridge. The implementation stores a global set of introduced UIDs and a cumulative-consumption ledger; duplicate introduction or cumulative consumption above original face value terminates the run.

\begin{table}[htbp]
\centering
\caption{Boundary non-reuse audit. Amounts are EUR billions and are rounded in the table; identities are tested in integer cents. ``Consumed'' refers to physical consumption from newly introduced January--February records during the preceding cohort's bridge; ``carried'' is the residual mass available in the record's own issue year.}
\label{tab:nonreuse-audit}
\footnotesize
\renewcommand{\arraystretch}{1.12}
\setlength{\tabcolsep}{3.2pt}
\begin{tabular}{rrrrrrr}
\toprule
Cohort & New bridge records & New mass & Cycle consumed & Cycle carried & Path consumed & Path carried \\
\midrule
2012 & 82 & 0.002 & 0.000 & 0.002 & 0.000 & 0.002 \\
2013 & 84 & 0.026 & 0.000 & 0.026 & 0.000 & 0.026 \\
2014 & 140 & 0.053 & 0.000 & 0.053 & 0.000 & 0.053 \\
2015 & 189 & 0.083 & 0.000 & 0.082 & 0.000 & 0.082 \\
2016 & 558 & 0.035 & 0.006 & 0.029 & 0.008 & 0.027 \\
2017 & 1,000 & 0.057 & 0.001 & 0.056 & 0.006 & 0.051 \\
2018 & 2,184 & 0.131 & 0.006 & 0.125 & 0.009 & 0.122 \\
2019 & 61,091 & 2.364 & 0.908 & 1.456 & 1.365 & 0.999 \\
2020 & 22,779 & 2.132 & 0.854 & 1.278 & 1.200 & 0.931 \\
2021 & 12,958 & 6.061 & 3.029 & 3.032 & 4.515 & 1.546 \\
2022 & 14,265 & 5.011 & 1.503 & 3.509 & 2.439 & 2.573 \\
2023 & 1,240 & 0.630 & 0.323 & 0.308 & 0.343 & 0.287 \\
\midrule
Total & 116,570 & 16.584 & 6.628 & 9.956 & 9.885 & 6.699 \\
\bottomrule
\end{tabular}
\end{table}

At every boundary and for both methods, new bridge mass equals bridge consumption plus residual carried mass to the cent. All 751,192 source UIDs in the 2012--2024 execution stream are introduced once, and the global consumption ledger reports zero overconsumed records. The final boundary uses its documented observed source horizon; the 2012--2022 subset supplies the equal-horizon sensitivity.

\begin{table}[htbp]
\centering
\caption{Boundary source and coverage by cohort group.}
\label{tab:bridge-coverage}
\footnotesize
\begin{tabularx}{\textwidth}{>{\raggedright\arraybackslash}p{0.14\textwidth}>{\raggedright\arraybackslash}p{0.25\textwidth}>{\raggedright\arraybackslash}p{0.20\textwidth}>{\raggedright\arraybackslash}X}
\toprule
Issue cohort & Next-year source & Coverage & Identifier continuity \\
\midrule
2012--2019 & Harmonized longitudinal source & Complete Jan--Feb $y+1$ & Common anonymous company-code namespace \\
2020 & Accepted 2021 monthly workbooks & Complete Jan--Feb 2021 & One-to-one company crosswalk; unresolved firms remain distinct \\
2021 & Harmonized 2022 source & Complete Jan--Feb 2022 & 2021 CIF-to-anonymous-code crosswalk; unresolved firms remain distinct \\
2022 & Harmonized 2023 source & Complete Jan--Feb 2023 & Common anonymous company-code namespace \\
2023 & Supplied 2024 source & 1 Jan--8 Feb 2024 & Common harmonized company codes; February is incomplete \\
\bottomrule
\end{tabularx}
\end{table}

For cycles, cohort PMR is the sum of removed invoice-leg mass by issue year. Reciprocal path PMR is attributed by removed leg. For a non-bilateral path, the one unit of PMR is split symmetrically across the incoming and outgoing consumed fragments. The conservative bound credits a cohort only when both legs belong to it; the liberal bound credits it when either leg does. Across all cohorts, these bounds place path PMR between 47.844\% and 48.560\%, around the symmetric estimate of 48.202\%.

\section{Optimization and stability diagnostics}\label{app:optimization}

The continuous LP creates one operation variable for each candidate and representative event day, plus source-record allocation variables on every supporting edge. Candidate-edge allocations equal the operation amount, and total allocations from a record cannot exceed its original value. The objective uses PMR per unit: one for non-bilateral paths, two for reciprocal paths, and circuit length for cycles. Because source amounts are divisible in the relaxation, the LP weakly dominates integer-cent schedules.

The 156 small-component instances contain all tractable nontrivial components selected by the screening rule. The 100 induced-core instances comprise 25 samples from each of 2019, 2020, 2022, and 2023, with 3--8 nodes and at most 60 records. The LP is a computational bound for these instances only; it is not extrapolated to the full annual graphs.

\section{Selected replay checks}\label{app:replay}

For each logged operation, replay verifies: (i) every source row exists; (ii) each fragment belongs to the claimed edge; (iii) the operation day lies between issue and due dates; (iv) fragments on each edge sum to the logged amount; (v) residuals remain nonnegative; (vi) instruction mass and PMR identities hold; and (vii) combined firm-level net positions equal the initial positions. Across the 48 method-phase combinations, replay covers 852,387 fragment allocations and reports zero errors. The global UID ledger is separate from phase replay: it accumulates consumption over the full 2012--2024 stream and verifies that every record is introduced once and cumulative consumption never exceeds original amount.

\section{Reference implementation}\label{app:code}

The accompanying Python package implements atomic CSV input, exact edge-day and common-day capacities, PMR-aligned causal daily path and cycle policies, rolling annual execution with non-reusable bridge records, source-fragment issue-year attribution, sequential mixed daily execution, continuous LP bounds for tractable instances, plotting, and independent replay. Unit tests verify temporal admissibility, accounting identities, deterministic fragment consumption, and weak dominance of the causal schedule by the corresponding full-information LP relaxation on synthetic instances. The package contains no raw identifiable invoices or reversible company map.

\begingroup
\sloppy

\endgroup

\end{document}